\theoremstyle{plain}
\newtheorem{lemma}{Lemma}
\newtheorem{rem}{Remark}
\newtheorem{exm}{Example}
\DeclareFontFamily{U}{mathx}{\hyphenchar\font45}
\DeclareFontShape{U}{mathx}{m}{n}{
	<5> <6> <7> <8> <9> <10>
	<10.95> <12> <14.4> <17.28> <20.74> <24.88>
	mathx10
}{}
\DeclareSymbolFont{mathx}{U}{mathx}{m}{n}
\DeclareMathSymbol{\bigtimes}{1}{mathx}{"91}
	\definecolor{darkblue}{rgb}{0.0, 0.0, 0.55}
\theoremstyle{remark}
\definecolor{darkorange}{rgb}{1.0, 0.55, 0.0}
\begin{document}
%\tikzset{new spy style/.style={spy scope={%
%			magnification=2.8,
%			size=1.25cm,
%			connect spies,
%			every spy on node/.style={
%				rectangle,
%				draw,
%			},
%			every spy in node/.style={
%				draw,
%				rectangle,
%			}
%		}
%	}
%}
\SetKwRepeat{Do}{do}{while}
\newcolumntype{L}[1]{>{\raggedright\let\newline\\\arraybackslash\hspace{0pt}}m{#1}}
\newcolumntype{C}[1]{>{\centering\let\newline\\\arraybackslash\hspace{0pt}}m{#1}}
\newcolumntype{R}[1]{>{\raggedleft\let\newline\\\arraybackslash\hspace{0pt}}m{#1}}
%\pgfdeclarelayer{background}
%\pgfdeclarelayer{foreground}
%\pgfsetlayers{background,main,foreground}
\newcommand{\oeq}{\mathrel{\text{\sqbox{$=$}}}}
\setlength{\textfloatsep}{0.1cm}
\setlength{\floatsep}{0.1cm}
%\setlength\abovecaptionskip{-5pt}
%\tikzstyle{int}=[draw, fill=white!20, minimum size=2em]
%\tikzstyle{init} = [pin edge={to-,thin,black}]
\title{%OTFS-SCMA: Sparse Code Multiple Access with Orthogonal Time Frequency Space Modulation 
	%OTFS-SCMA: A Code-Domain NOMA Approach for Orthogonal Time Frequency Space Modulation
%\huge{OTFS-NOMA based on SCMA}
Convolutional Sparse Coding based Channel Estimation for OTFS-SCMA in Uplink
\vspace{-.1em} }

\author{
    \IEEEauthorblockN{\small{ Anna Thomas$^{1}$, Kuntal Deka$^{1}$, P. Raviteja$^{2}$, and Sanjeev Sharma$^{3}$ } }\\
%\author{Author$^{1}$, Auth^{1}$, Sanjeev Sharma$^{2}$, and Baltasar Beferull-Lozano (Senior Member, IEEE)$^{3}$\\
	$^1$Indian Institute of Technology Goa, India,  $^2$Qualcomm, San Diego, California, United States,\\   $^3$Indian Institute of Technology (BHU) Varanasi, India  \vspace{-0.4in}}

\maketitle

\begin{abstract}
Orthogonal time frequency space (OTFS) has emerged as the most sought-after modulation technique in a high mobility scenario. Sparse code multiple access (SCMA) is an attractive code-domain non-orthogonal multiple access (NOMA) technique. Recently a code-domain NOMA approach for OTFS, named OTFS-SCMA, is proposed. OTFS-SCMA is a promising framework that meets the demands of high mobility and massive connectivity. This paper presents a channel estimation technique based on the convolutional sparse coding (CSC) approach for OTFS-SCMA in the uplink. The channel estimation task is formulated as a CSC problem following a careful rearrangement of the OTFS input-output relation. We use an embedded pilot-aided sparse-pilot structure that enjoys the features of both OTFS and SCMA. The existing channel estimation techniques for OTFS in multi-user scenarios for uplink demand extremely high overhead for pilot and guard symbols, proportional to the number of users. The proposed method maintains a minimal overhead equivalent to a single user without compromising on the estimation
error. The results show that the proposed channel estimation algorithm is very efficient in bit error rate (BER), normalized mean square error (NMSE), and spectral efficiency (SE).
\end{abstract}

\begin{IEEEkeywords}
OTFS, SCMA, NOMA,  channel estimation,  compressive sensing,  convolutional sparse coding.
\end{IEEEkeywords}

%\IEEEpeerreviewmaketitle
%
%\textcolor{red}{All figures are roughly drawn for draft}
\section{Introduction}

\subsection{Motivation}
In current communication standards for 4G  and 5G,  orthogonal frequency division multiplexing (OFDM) has been undisputedly recommended for modulation. OFDM was particularly designed to eliminate the inter-symbol-interference (ISI) caused by the time dispersion of the channel. The success of OFDM depends on the orthogonality of the sub-carriers.  If the channel introduces frequency dispersion too in the form of Doppler shifts, the  orthogonality of the sub-carriers is destroyed.  This issue creates inter-carrier-interference (ICI),  which hinders the application of OFDM in high Doppler scenario.
%The evolution of 5G wireless network has been widely acknowledged owing to its trademarks of increased-spectrum, faster connection speeds, massive connectivity and much lower latency. These features are implemented by introducing new technologies of mmWave frequencies, MIMO beamforming and coherently adopting technologies from 4G LTE massive MIMO, making it organic. Although 5G is foreseen as a revolutionary change for the future wireless networks, many of its facilities are efficiently rendered only at the cost of massive infrastructure and complex computations. The limitations of conventional modulation techniques like TDMA, CDMA and OFDM is a bottleneck to render the facilities promised by 5G. Especially, the performance of OFDM in high Doppler environments like vehicle-to-everything (V2X) communication has intercepted the advancements of wireless communication. In this scenario, development of a new modulation technique that can overcome the challenges faced by 5G, was inevitable. 
The coming generation of wireless networks implicates high Doppler due to increased mobility in environments like vehicle-to-everything (V2X) and high carrier frequency such as in mmWave communication. Orthogonal time frequency space (OTFS) has turned up as a solution to boost
up the potential of such communication scenario~\cite{Hadani}. The remarkable performance of OTFS is attributed to the two-dimensional (2D) modulation technique in which both the data and the channel are represented in delay-Doppler (DD) domain, by exploiting the quasi-periodicity property in this domain. OTFS modulation benefits from  the existence of 2D localised pulses in DD domain.  These pulses subsequently occupy the entire time-frequency (TF) grid, thereby achieving full diversity. All the symbols in an OTFS DD frame experience nearly the same channel. Moreover, fewer parameters are required to describe the channel in the DD domain facilitating a sparse representation, which makes the channel estimation an easy task.     %The concept of OTFS modulation has been systematically derived and validated from the basic principles in \cite{otfs_derivation}.

%OTFS involves modulation in two domains, delay-Doppler (DD) and time-frequency (TF) which made message detection for OTFS a focus of research works. By analyzing the circulant-block structure of the channel coefficient matrix \cite{Raviteja2019a}, low complexity linear equalization algorithms like LMMSE (\cite{Tiwari2019}, \cite{Surabhi2020a}) are found to be effective. Also, because of the high sparsity of the channel coefficient matrix, non-linear detection algorithms like message-passing (MP) \cite{ravi_2018_TWC} and rake detector \cite{rake_detector} are also proved to be successful. 

The superior performance of OTFS is evident in multi-user scenarios also.   There can be two approaches: (1) orthogonal multiple access (OMA) and (2)~non-orthogonal multiple access (NOMA). In NOMA, various users share resources, unlike in OMA. The  spectral efficiency of NOMA is significantly better than that of OMA.  OTFS-OMA with different interleaving patterns was proposed  in \cite{surabhi_MA,Khammammetti2019a}.  OTFS with  power-domain NOMA  was explored in \cite{Ding2019,pd_noma}. Recently, we proposed  a code-domain NOMA approach for OTFS based on sparse code multiple access (SCMA), named   OTFS-SCMA \cite{otfs_scma}. The simulation results and diversity analysis showed the superiority of OTFS-SCMA  over  other multi-user techniques of OTFS. Any channel estimation technique for OTFS can be easily extended to OTFS-SCMA in  downlink, unlike in  uplink. Thus, it is highly relevant to devise a channel estimation technique for OTFS-SCMA in  uplink. 

% Although existing channel estimation techniques are applicable for OTFS-SCMA in downlink \cite{otfs_scma}, for the uplink channel estimation of OTFS-SCMA it is essential to develop a new strategy. The conventional channel estimation techniques for OTFS and SCMA are briefly reviewed under \hyperlink{par:related_works}{\textit{`Related works'}}. In this paper, we propose an embedded, vector-pilot symbol structure based channel estimation, using convolutional sparse coding, which ensures minimum guard band overhead compared to the existing techniques. Also achievs a BER close to that of perfect channel state information (CSI). The advantages of the proposed method are highlighted under \hyperlink{par:contributions}{\textit{`Contributions'}}.
%\vspace{-0.1in}
\subsection{Related Prior Works}\label{par:related_works}
The   detection of an OTFS-based  system  necessitates  an accurate channel estimation. 
OTFS-SCMA integrates the techniques of modulation of OTFS and the multiple access scheme of  SCMA. Hence, to devise the channel estimation algorithm of OTFS-SCMA, the conventional methods used for both OTFS and SCMA are to be analyzed. The commonly-adopted   approach  of channel estimation for OTFS 
is the pilot-aided method in \cite{hadani_ce}.  With the help of the pilot symbols placed in the DD grid, the channel coefficients are estimated.  Following  this approach, the authors in \cite{ch_est1} presented a systematic method for  channel estimation using an embedded single QAM pilot symbol with a guard band. By applying thresholding in the observation region, paths are identified,  and the channel coefficients are estimated by element-by-element division.     The same method was also extended to   the  MIMO and multiuser  cases  \cite{ch_est1}. In \cite{sbl_ce}, the authors adopted a sparse signal recovery approach  using the sparse Bayesian learning (SBL) algorithm. In this method,  the embedded pilot structure does not contain a guard band; instead, it has multiple  QAM pilot symbols.
Reference \cite{ch_est3} also follows a similar pilot-aided technique for channel estimation of OTFS-MIMO.   Channel estimation techniques for massive MIMO-OTFS are developed in \cite{CE_heath,ce_mm}, where  a 3D channel model of delay-Doppler-angle is considered.  To estimate the channel parameters,   3D-structured Orthogonal Matching Pursuit (OMP) and SBL algorithms are used in \cite{CE_heath} and \cite{ce_mm}, respectively.   The channel estimation for OTFS-OMA is carried out in \cite{ce_oma} using a sparse signal recovery approach based on OMP and subspace pursuit (SP). An embedded QAM pilot symbol with a guard band is used for OTFS-OMA channel estimation in \cite{surabhi_MA}. However,  these methods either need to dedicate an entire OTFS frame for the training of the  pilot symbols or  require significantly large pilot and guard band overhead which is proportional to the number of transmitting antennas or users. %\textcolor{red}{Mention about CE of OTFS-PD-NOMA}
In the context of  the SCMA channel estimation, the reference \cite{scma_ce_rev}  considered both pilot-aided and data-aided approaches and analyzed  the  trade-off between the spectral efficiency and pilot overheads.  Active user detection  followed by channel estimation for SCMA was investigated in \cite{scma_ce1, scma_ce2}. The  method   in  \cite{scma_sparse_ce} considered  sparse pilot vectors with the  non-zero pilot symbols  placed according to the pattern of the corresponding codebook of a user.  %SCMA channel estimation algorithms use training of OFDM frames with vector or codeword pilot symbols. 
%The limitations of the existing techniques to apply for OTFS-SCMA uplink are explained in Section~\ref{subsec:oma_ce_ana}. Hence for OTFS-SCMA uplink, a channel estimation scheme using embedded and vector pilot symbol is considered to be the most befitting structure.
%\vspace{-0.1in}
\subsection{Contributions}\label{par:contributions}
This paper proposes a channel estimation method for OTFS-SCMA based on convolutional sparse coding (CSC).  % The pilot vector, guard band, and data are embedded in one OTFS frame. The same approach can be adopted for downlink. 
This method is found to provide  impressive  bit-error-rate (BER) performance with minimal overhead and complexity.   The   special features and the main  contributions of the proposed method are summarized below:
\begin{itemize} 
	\item This work is the first attempt to perform channel estimation for OTFS in NOMA environment. Although OTFS was studied in the context of power-domain NOMA in \cite{OTFS_NOMA_Ge,Ding2019,pd_noma}, no channel estimation techniques were proposed.
	\item The pilot structure of the proposed method is embedded, meaning that the pilot symbols and the data symbols are transmitted in the  same frame.  Also, the proposed pilot vectors of the users are  sparse and  non-orthogonal as  they  follow the same sparsity pattern as that of data vectors.  The sparsity reduces the interference amongst pilot vectors of multiple users. 
	 %Though in \cite{ch_est1} and \cite{surabhi_MA}, the pilot structure is embedded, orthogonal (non-overlapping) QAM pilot symbols are used.
	\item The non-orthogonal pilot structure helps to maintain a very low pilot and guard band overhead. In \cite{ce_oma}, the guard band is absent, but it requires a dedicated frame for pilot symbols. While in \cite{ch_est1} and \cite{surabhi_MA}, the overhead needed for channel estimation is proportional to the number of users, in our proposed method the guard band requirement is the same as that of a single user. %A detailed comparison of overhead is given in TABLE~\ref{table:guardband}.
%	\item To apply a sparse signal recovery approach, we propose to use sparse vectors following the same sparse structure of SCMA codewords. Though the reference \cite{ce_oma} also uses vector pilots, by introducing the sparse structure in pilot vectors, the proposed scheme incorporates the benefits from simple SCMA channel estimation \cite{scma_sparse_ce}.
	\item The CE for OTFS-SCMA in uplink is formulated as a  CSC problem.    This formulation   significantly reduces the dimensionality of the problem  facilitating the use of low-complexity SP-based recovery algorithms. %The reference  \cite{ce_oma} also follows the SP algorithm but considers the channel coefficient vectors of all users as a single sparse signal to be recovered. 
	%The complexity is analyzed in Section~\ref{subsec:complexity}.
	\item For solving the CSC-based channel estimation problem, SP is considered. SP demands the knowledge of sparsity. By exploring the properties of the standard propagation channel models and the structures of  the pilot and the guard band, the unknown sparsity of paths is converted to a known sparsity of the number of users.
	%Hence the SP algorithm is successfully implemented.
	\item The initialization of the estimates needed for SP is done sequentially for each user in a greedy manner, resulting in fast convergence of the algorithm. Furthermore, the simulation-based observations indicate that the modified initialization reduces the  probability of false alarm and miss-detection of paths. Hence the proposed method performs close to that of the perfect channel state information (CSI). 
		\item Mutual coherence is an essential property of any compressive-sensing approach. It is the maximum correlation of any two dictionary elements or columns. The lesser the value of mutual
		coherence, the more is the chance of successful recovery. The pilot vectors are designed such that the mutual coherence of the dictionary is minimized. For this optimization of the pilot
		vectors, differential evolution \cite{DE_price_20} is considered.
  %	\item Although devised for OTFS-SCMA,  the proposed method can be applied to  OTFS-OMA by using non-sparse vector pilot symbols.
\end{itemize}
% \vspace{-0.1in}
\subsection{Outline} Section~\ref{sec::prelim} describes the  preliminaries of OTFS and SCMA.  This section also presents various features of  OTFS-SCMA and the concept of CSC.  The proposed channel estimation technique and its analysis are presented in  Section~\ref{sec:proposed}, highlighting the embedded pilot-aided structure and the CSC modeling. The CSC-based sparse signal recovery algorithm for channel estimation is presented in Section~\ref{subsec:ce_algorithm}. Section~\ref{sec:simulations} presents the simulation results and their analysis.  Finally, the  paper is concluded in Section~\ref{sec:conc}.   

{\underline{\textit{Notations}}}:  Boldface upper-case, boldface lower-case and lower-case letters  denote the matrices, the vectors and the scalars respectively. For  an $m \times n$  matrix $\bf{A}$, $\text{vec}(\bf{A})$ denotes  the $mn \times 1$ column vector which is obtained  by vertical concatenation of the $n$ columns of $ \bf{A}$. $\mathbb{I}_N$ denotes the identity matrix of size $N \times N$. The all zero matrix of size $n \times m$ is denoted by ${\bf{0}}_{n\times m}$. For a matrix $\bf{A}$, ${\bf{A}}^T$, ${\bf{A}}^{H}$, and ${\bf{A}}^{\dagger}$ represent the transpose, the Hermitian transpose, and the pseudo-inverse of $\bf{A}$ respectively. For any real number $x$, $\lceil x \rceil$ is the  smallest integer that is not smaller than $x$. For any integers $k$ and $N$, the notation $[k]_N$ refers to ($k \mod{N}$). $\circledast$ denotes circular convolution and $\odot$ denotes Hadamard product.  $A=|\mathbb{A}|$ denotes the cardinality of the modulation alphabet $\mathbb{A}$. $\mathbb{C}$ denotes the set of complex numbers.  The notation ${\mathcal{CN}}\left(0, \sigma^2\right)$  denotes a zero-mean complex Gaussian random number with variance $\sigma^2$.  Tx and Rx denote  transmitter and receiver, respectively.
%\vspace{-0.1in}
\section{Preliminaries}\label{sec::prelim}
\subsection{OTFS}\label{subsec:otfs_prel}
We discuss the   basic operations involved in  OTFS  without delving into the conceptual details. OTFS  modulation considers an $N\times M$ DD grid ($N$ Doppler bins, $M$ delay bins) for processing  the input and fetching the estimated output data. The delay and the Doppler bins are considered in the horizontal and vertical directions of the DD grid. As  the input and output data are perceived in the DD domain rather than the conventional TF domain, OTFS modulation includes an additional pre-processing block of inverse symplectic finite Fourier transform (ISFFT) and  post-processing block of  SFFT. This structure of OTFS makes it compatible with the  existing  OFDM system. %SFFT can be considered as a 2D transform combination of a finite Fourier tranform and its inverse. Similarly ISFFT follows its inverse 2D transform.
 At first, the  OTFS modulator  converts the input data $x[k,l]$ in the DD domain to the  symbols $X[n,m]$  in the TF domain  using ISFFT operation: $X[n,m] =  \frac{1}{MN}\sum_{k=0}^{N-1}\sum_{l=0}^{M-1}x[k,l]e^{j2\pi\left(\frac{nk}{N} - \frac{ml}{M}\right)}$. The TF data is converted to time-domain signal $s(t)$ by applying Heisenberg transform: $s(t) = \sum_{n=0}^{N-1}\sum_{m=0}^{M-1}X[n,m]e^{j2\pi m\triangle f(t-nT)}g_{\text{tx}}(t-nT)$ where $g_{\text{tx}}(t)$  is the transmit basis pulse. The signal $s(t)$ is transmitted through a wireless communication channel whose DD-domain response is $h(\tau,\nu)$. Hence the received signal in time domain is given by $r(t) = \iint h(\tau, \nu)e^{j2\pi\nu(t-\tau)}s(t-\tau)\;d\tau d\nu  +z(t)$, where $z(t)$  is the additive white Gaussian noise (AWGN) signal. At the receiver, by applying Wigner transform,  the time-domain signal is converted to TF domain:  $Y[n,m] =   \int e^{-j2\pi\nu(t-\tau)}g_{\text{rx}}^{\ast}(t-\tau)r(t) \, dt \arrowvert_{\tau=nT, \; \nu=m\triangle f}$, where $g_{\text{rx}}(t)$ is the receive basis pulse. Finally, SFFT  converts the TF signal back  to the DD-domain: $y[k,l] =  \sum_{n=0}^{N-1}\sum_{m=0}^{M-1}Y[n,m]e^{-j2\pi(\frac{nk}{N} - \frac{ml}{M})}$.

The benefit of DD-domain processing is  fully realized if the  pulses $g_{\text{tx}}(t)$  and $g_{\text{rx}}(t)$  satisfy the  so-called \textit{bi-orthogonality} property \cite{ravi_2018_TWC}, under which case, they are called \textit{ideal} pulses.   For ideal pulse-shaping, the input-output relation in the DD domain is given by
\begin{equation}
y[k,l] = \sum_{i=1}^{P}h_{i}x[[k-k_i]_N,[l-l_i]_M] + z[k,l]
\label{eq:otfs_basic}
\end{equation}
where, $k=0,1,\ldots ,N-1$, $l=0,1, \ldots ,M-1$; $P$ is the total number of paths; $h_{i}\sim \mathcal{CN}(0,\frac{1}{P})$, $k_i$, and $l_i$ denote the complex channel gain, integer Doppler  and integer delay tap,  respectively  of the $i^{\text{th}}$ path; and ${z}[k,l]$ is the complex AWGN. For a given channel, $l_{\tau}$ and $k_\nu$ denote the maximum integer delay and integer Doppler tap, respectively. For  maximum delay $\tau_{\max}$ and maximum Doppler $\nu_{\max}$, we must have $\tau_{\max} < \frac{l_\tau}{M\triangle f}$ and $\nu_{\max} < \frac{k_\nu}{NT}$ where, $\triangle f$ is the sub-carrier bandwidth and $T$  is the symbol duration satisfying $\triangle f=1/T$. The input-output relation in (\ref{eq:otfs_basic}) can be compactly expressed as $\mathbf{y}=\mathbf{Hx+z}$, where $\mathbf{H}\in \mathbb{C}^{{NM}\times NM}$; $\mathbf{x}$, $\mathbf{y}$ and $\mathbf{z}$ are the $NM\times 1$ input, output and noise vectors formed by $\text{vec}(\cdot)$ of the corresponding $N\times M$ grids.   Note that ${\bf{H}}$ follows a sparse and circulant-block structure \cite{Raviteja2019a}.  

%If $g_{\text{tx}}(t)$  and $g_{\text{rx}}(t)$ are rectangular pulses, (\ref{eq:otfs_basic}) includes an extra phase-shift term without altering  the sparsity of $\mathbf{H}$  \cite{Raviteja2019a}:
%\begin{equation*}
%y[k,l] = \sum_{i=1}^{P}h_i \tilde{\beta}_i[k,l]x[[k-k_i]_N,[l-l_i]_M]+z[k,l]
%\end{equation*}
%where
%\begin{equation}
%\tilde{\beta}_i[k,l] =\begin{cases}
%(\frac{N-1}{N})e^{j2\pi\frac{k_i}{N}(\frac{l-l_i}{M})}e^{-j2\pi(\frac{[k-k_i]_N}{N})}&\small{0\le l < l_i}\\
%e^{j2\pi\frac{k_i}{N}(\frac{l-l_i}{M})}&\small{l_i \le l < M}.
%\end{cases}
%\label{eq:otfs_rect}
%\end{equation} 
%\vspace{-0.1in}
\subsection{SCMA}\label{subsec:scma_prel}
SCMA is a code-domain NOMA technique in which the available $K$ orthogonal resources (time-slot/frequency-band/code) are shared among  $J$ users ($J>K$) \cite{nikopour2013,taherzdeh2014}. This structure is denoted by $\left(J,K\right)$  SCMA system with an overloading factor of $\lambda=\frac{J}{K}>100\%$. Each user has a  specific codebook having $A$ codeword vectors of length $K$. The codebooks are designed in such a way that each of the $K$ resources is shared by ${d_f}$ users and each  codeword has only $d_v$ non-zero components. As an example, consider a $\left(J=6, K=4\right)$ SCMA system  with $\lambda=150\%$. The set of codebooks can be represented by  $\{\mathbf{C}_1, \mathbf{C}_2, \ldots ,\mathbf{C}_6\}$. For $A=4$, the codebook for the $j^{\text{th}}$  user is given by $\mathbf{C}_j = [\mathbf{c}_{j1} | \mathbf{c}_{j2} | \mathbf{c}_{j3}| \mathbf{c}_{j4}]$ where $\mathbf{c}_{ji} \in \mathbb{C}^{4\times 1}$, $i=1, \ldots, 4$. If the input data of the $j^{\text{th}}$ user is $i$, the corresponding  codeword vector $\mathbf{c}_{ji}$  is selected.  In this way, 6 codeword vectors are identified and  they are simultaneously transmitted over the $4$ available resources. Due to  the sparse structure of the SCMA codewords,  message passing   algorithm  (MPA) can be successfully used for the  data detection \cite{wen_SCMA_2017,discretization}. 
%\vspace{-0.1in}
\subsection{OTFS-SCMA}\label{subsec:otfs_scma}
OTFS-SCMA is a code-domain NOMA approach for OTFS  recently proposed in \cite{otfs_scma}.    The key features of  OTFS-SCMA are described in the following. 
\subsubsection{Codeword Allocation Schemes}
The significant difference of OTFS-SCMA from the other related works is that it uses SCMA vector codewords as data symbols instead of QAM symbols.
In \cite{otfs_scma}, three schemes of SCMA codeword allocation are presented.  In \textit{Scheme-1}, the codewords are placed along the Doppler  axis as  $K\times 1$ vectors and hence $N$ should be an integer multiple of $K$.   \textit{Scheme-2} allocates the   codewords as  $1\times K$ vectors along the delay axis with  $M$ being an integer  multiple of $K$. Also, a third scheme, \textit{Scheme-3} is analyzed where the non-zero components of the codewords are swapped in a particular fashion  after  allocating the codewords as per \textit{Scheme-1}   or  \textit{Scheme-2}.   Note that, in all the three schemes,  the overall overloading factor of OTFS-SCMA is the same as that of the underlying basic $\left(J, K\right)$ SCMA system, i.e., $\lambda=J/K$. 
%However, \textit{Scheme-1} and \textit{Scheme-2} performs equally for high values of $N$ and $M$, while \textit{Scheme-3} has certain limitations to implement practically.     
\subsubsection{Downlink and Uplink}
In the downlink scenario, the codewords from   $J$ SCMA encoders are superimposed first,  followed by an  OTFS modulator. The input-output relationship of the ${j}^{\text{th}}$ user is given by 
\begin{equation}
\mathbf{y}_{j} = \mathbf{H}_j\mathbf{x}_{\text{sum}} + \mathbf{z}_j
\label{otfs_scma_dl}
\end{equation}
where $\mathbf{x}_{\text{sum}}$ is  the superimposed input and $\mathbf{H}_j$ is the channel matrix for the  $j^{\text{th}}$ user. For a particular user, the received data is first  passed though an OTFS detector (LMMSE detector  using $\mathbf{H}_j$ ) to resolve the DD interference. The output of the OTFS detector is a noisy version of $\mathbf{x}_{\text{sum}}$.  Finally, an SCMA detector (MPA for AWGN channel) acts upon the noisy $\mathbf{x}_{\text{sum}}$ to detect  the user's data by removing the multi-user interference.
\vspace{-0.2in}
\begin{figure}[!htbp]
\centering
\includegraphics[height=6cm, width=16cm]{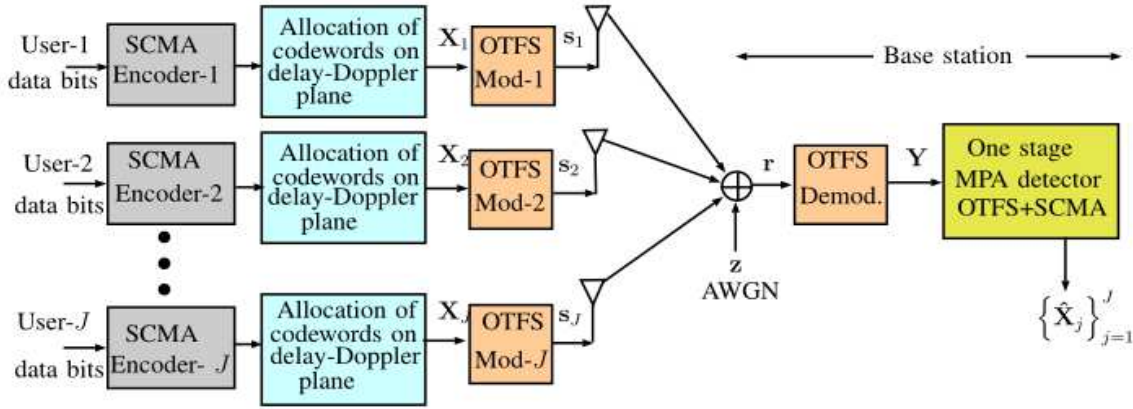}
\caption{\footnotesize{Block diagram of  OTFS-SCMA in uplink.}}
\label{fig:os_uplink}
\end{figure}

Fig.~\ref{fig:os_uplink} depicts an uplink scenario of OTFS-SCMA.  The input-output relation is governed by
\begin{equation}
\mathbf{y} = \sum_{j=1}^{J}\mathbf{H}_j\mathbf{x}_{j} + \mathbf{z} = \mathbf{H}\mathbf{x} + \mathbf{z}
\label{otfs_scma_ul}
\end{equation}
where $\mathbf{H}=\left[{\bf{H}}_1, \ldots, {\bf{H}}_J\right]\in \mathbb{C}^{NM \times JNM}$ and $\mathbf{x}=\left[{\bf{x}}_1^T, \ldots, {\bf{x}}_J^T\right]^T \in \mathbb{C}^{JNM \times 1}$. Observe that $\mathbf{H}$ has the combined effect of both OTFS and SCMA. It is not possible to segregate the DD interaction and the multi-user fusion.   Thus the sequential OTFS and SCMA detection  is not feasible in uplink. A  combined detector for OTFS-SCMA is proposed in \cite{otfs_scma} using MPA, which resolves the DD and the multi-user interference in single stage. %For the implementation of MP algorithm, $\mathbf{H}$ and $\mathbf{x}_{all}$ are compressed to $NM \times \frac{JNMd_v}{K}$ and $\frac{JNMd_v}{K} \times 1$ dimensions respectively, considering only the $d_v$ non-zero locations of SCMA codeword of each user.
\subsubsection{Channel Estimation in Downlink}
  The downlink scenario explained above indicates that the multipath channel values are required only for the OTFS detector.  For the SCMA detector,  only the effect of the AWGN channel remains. Hence, the embedded QAM-pilot-based channel estimation technique   with a guard band from \cite{ch_est1} is successfully extended to  OTFS-SCMA in downlink. The results presented in \cite{otfs_scma} demonstrate that a single QAM pilot is sufficient for the channel estimation with a pilot power of $35$ dB so that the BER performance closely follows that of  the perfect CSI case. This strategy is not applicable for OTFS-SCMA in uplink due to the
  very high guard band overhead. Hence, in this paper, we propose an efficient channel estimation
  technique suitable for  uplink scenarios of OTFS-SCMA. 
\subsubsection{Diversity and BER Analysis}
The diversity analysis in \cite{otfs_scma} shows that  OTFS-SCMA can achieve a significantly higher asymptotic diversity order than OTFS-OMA.
The  use of  vector codewords instead of the conventional QAM symbols paves the way for the diversity gain of OTFS-SCMA. 
 \textit{Theorem 1} presented in \cite{otfs_scma} derives the diversity orders of \textit{Scheme-1} and \textit{Scheme-2} for both uplink and downlink. While the diversity gain of \textit{Scheme-1} depends on the  number of distinct mod-$K$ Doppler taps, for \textit{Scheme-2}, it is related to the  number of  distinct  mod-$K$ delay taps. Note that \textit{Scheme-1} and \textit{Scheme-2} perform equally for high values of $N$ and $M$. \textit{Scheme-3} can achieve a higher diversity order by interleaving the non-zero components of the SCMA codewords  based on the  channel information at the transmitter.  The simulation results obtained for the practical EVA channel model \cite{eva}
 and different overloading factors agree with the diversity analysis, which shows that OTFS-SCMA can perform remarkably better than other multi-user OTFS schemes.
 \vspace{-0.1in}
\subsection{Convolutional Sparse Coding}
Convolutional sparse coding (CSC) is a structured coding technique that has found applications in many signal processing problems.  CSC was initially applied to one-dimensional signals \cite{csc_1d} and then extended
to two-dimensional ones \cite{csc_2d,decon_net}.   In CSC, a  signal $\mathbf{y}\in \mathbb{C}^{U\times 1}$ is represented as the sum of $v$ convolutions: 
\begin{equation}
\min_{\mathbf{x}_i}{f(\mathbf{x}_i)} \text{ s.t } \sum_{i=1}^{v}\mathbf{d}_i\circledast{\mathbf{x}_i}=\mathbf{y}
\label{eq:csc_1}
\end{equation} 
where $f(\cdot)$ is usually $\ell_1$-norm, $\{\mathbf{d}_{i}\}_{i=1}^{v}  \in {\mathbb{C}}^{u \times 1}$ are support filters ($u\ll U$),  and $\{\mathbf{x}_{i}\}_{i=1}^{v} \in {\mathbb{C}}^{U\times 1}$ are vectors of varying sparsity. The CSC problem of (\ref{eq:csc_1}) can be compactly represented as \cite{fast_csc,eff_csc}  
\begin{equation}
\min_{\tilde{\mathbf{x}}}{f(\tilde{\mathbf{x}})} \text{ s.t } \tilde{\mathbf{D}}\tilde{\mathbf{x}}={\mathbf{y}}
\label{eq:csc_2}
\end{equation} 
where $\tilde{\mathbf{D}}=\left[\tilde{\mathbf{D}}_1 \;  \tilde{\mathbf{D}}_2\; \ldots\; \tilde{\mathbf{D}}_v\right]$, $\tilde{\mathbf{D}}_i \in {\mathbb{C}}^{U \times U}$ is the circulant matrix formed by  $\mathbf{d}_i$ and its $U-1$ circularly shifted vectors and $\tilde{\mathbf{x}}=\left[{\mathbf{x}}_1^{T} \; {\mathbf{x}}_2^{T}\; \ldots \; {\mathbf{x}}_v^{T} \right]^{T}$.
Observe that  a block circulant shift structure  is present in the coding dictionary  $\tilde{\mathbf{D}}$. Moreover, the sparsity of the signal $\tilde{\mathbf{x}}$ makes it possible to use sparse signal recovery algorithms.

Recently, less complex CSC algorithms are developed based on  \textit{working locally; thinking globally} \cite{csc_slice}. Under this local paradigm,  the global dictionary  is broken down into  smaller  local dictionaries $\mathbf{D}_{L}$ of dimension $u\times v$, where $\mathbf{D}_{L} = \left[\mathbf{d}_1 \; \mathbf{d}_2  \; \ldots \; \mathbf{d}_v  \right]$. Through a simple permutation of its columns, $\tilde{\mathbf{D}}$ can be represented as the concatenation of circularly shifted versions of $\mathbf{D}_L$. Fig.~\ref{fig:csc_1}  depicts the process of converting the global problem  into a local  one.

\begin{figure}[!htbp]
	\centering
	\includegraphics[height=6cm, width=16cm]{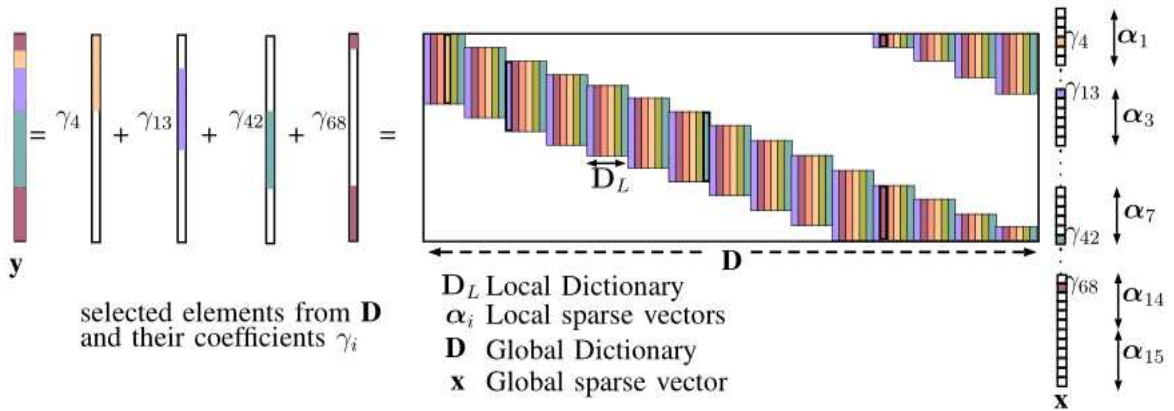}
	\caption{Dictionary structure of convolutional sparse coding.}
	\label{fig:csc_1}
\end{figure}

The CSC problem from  (\ref{eq:csc_2})  can now be written as 
\begin{equation}
\min_{\boldsymbol{\alpha}_i}{f(\boldsymbol{\alpha}_i)} \text{ s.t } \sum_{i=1}^{U}\mathbf{R}_i^{T}\mathbf{D}_L\boldsymbol{\alpha}_i=\mathbf{y}
\label{eq:csc_slice}
\end{equation}
where $\boldsymbol{\alpha}_i$ is the local $v\times 1$ sparse vector and $\mathbf{R}_i=\left[\mathbf{0}_{u\times (i-1)} \ \mathbb{I}_{u} \  \mathbf{0}_{u\times (U-i-u+1)}\right]$ is a  $u\times U$ matrix to extract a $u\times 1$ patch from $\mathbf{y}$. The CSC applications discussed in \cite{LoBCoD} and \cite{csc_greedy} establish that the sparse signal recovery algorithms based on convex relaxation (e.g. gradient descent) and greedy approach (e.g. matching pursuit, OMP, stage-wise OMP) are efficient when working with local dictionary also. The  use of   local dictionary helps to consider the global CSC of $\mathbf{y}$ in terms of independent CSC problems of smaller vectors of length $u$. It is interesting to note that the input-output relation for 
the proposed method with the DD grid having embedded pilot vectors can be modeled similarly to (\ref{eq:csc_1}), and its simplified form in Fig.~\ref{fig:csc_1}. This observation motivated us to develop a channel estimation technique for OTFS-SCMA using the
CSC approach, which is explained thoroughly in the next section.

\section{Formulation of Channel Estimation as a CSC Problem}\label{sec:proposed}
This section comprehensively describes the CSC-based  channel estimation for OTFS-SCMA, which is  organized as follows.  In Section~\ref{subsec:data_pilot}, we present the arrangement of the  data and the pilot vectors in the transmit DD grid and analyze the corresponding received grid. Section~\ref{subsec:ce_csc} formulates  the channel estimation task  as a CSC problem, based on the data and the pilot arrangement.%  guard band overhead, spectral efficiency, 
%The channel estimation method is first devised for the uplink  OTFS-SCMA  scenario, which is then  extended to the downlink case. 

We initially develop the channel estimation technique for the  ideal pulse shaping in which the   input-output relation follows a simple 2D circular convolution as shown in (\ref{eq:otfs_basic}). Later,  the proposed method is  extended to the rectangular pulse shaping case. 

\subsection{Pilot and Data Arrangement in Delay-Doppler Grid}\label{subsec:data_pilot}
Although the proposed method applies to any codeword allocation  schemes, for simplicity, we consider  \textit{Scheme-1} to describe the channel estimation technique. %The special features of the arrangement are highlighted in the following.
%\begin{itemize}
%	\item Pilot symbols are vectors instead of simple QAM symbols.
%	\item Pilot vectors  follow the same sparse structure of SCMA codewords of each user. 
%	%\item One pilot vector  is considered for one user.
%	\item The pilot vectors (data vectors too) of all users are overlapped, resulting in a significant reduction in the guard overhead.
%	%\item For a particular user, the pilot vectors are protected from the data vectors by a guard band.  The size of the guard band depends on the  delay and Doppler spread of the channel.  
%	\item The sparsity of the pilot vectors  allows the use of a sparse signal recovery algorithm  for channel estimation. 
%\end{itemize}
Let  ${\mathbf{X}}_u \in{\mathbb{C}}^{N\times M}$  denote the $u^{\text{th}}$  user's input symbol matrix which is placed on the respective $N\times M$  Tx DD grid.  If ${\mathbf{x}}_{u,{l}}\in{\mathbb{C}}^{N\times 1}$ denotes the column vector for delay tap $l$,  the input symbol matrix can be written as ${\mathbf{X}}_u =\left[{\mathbf{x}}_{u,{0}}  \ {\mathbf{x}}_{u,{1}} \ \ldots \ {\mathbf{x}}_{u,{M-1}} \right]$. The  element in the location $[k,l]$ of  ${\mathbf{X}}_u$ is denoted by ${x}_u\left[k,l\right]$  with $k=0,1, \ldots, N-1$  and $l=0,1, \ldots, M-1$. 
We consider column vectors of specific length for pilots. Suppose the pilot vector of the $u^{\text{th}}$ user is denoted by $\mathbf{p}_u=\left[p_{u,0}\  p_{u,1} \ \ldots \ p_{u,L_p-1}\right]^T$ and is placed at the delay index $\bar{l}$   such that $ \bar l +l_\tau=M-1$ where,  $l_{\tau}$ is the maximum integer delay tap. Thus a just adequate guard band from the data part is maintained  to minimize the  overhead. 
%The selection of the optimum length $L_p$ of the pilot vector is  discussed in Remark~\ref{rem:optimal_Lp} of Section~\ref{subsec:optimal_length}.   
The total number of data symbols that can be transmitted over the DD grid is  ${N_{\text{data}}}=\frac{N}{K}(\bar{l}-l_\tau)$.  The $n^{\text{th}}$  SCMA codeword transmitted by the $u^{\text{th}}$  user is given by   ${\bf{c}}_n^u= \left[{{c}}_{n,0}^u \ {{c}}_{n,1}^u \ \ldots \ {{c}}_{n,K-1}^u\right]^T$, $n=1, 2, \ldots, {N_{\text{data}}}$.   
%Let $\mathbf{x}_{u,s}$ denotes the $K\times 1$ codeword for the symbol $s$ of the $u^{\text{th}}$ user, where $s\in {\mathbb{A}}=\{1,2,3,4\}$. The  $n$  data symbols transmitted are denoted by $\left\{s_n\right\}^{\frac{N}{K}(\bar{l}-l_\tau)}_{n=1}$ where, $l_{\tau}$ is the maximum integer delay tap. The empty white cells denote the guard band\footnote[2]{Note that `0' in $\mathbf{x}_{u,s}$ and $\mathbf{p}_{u}$ are not part of guard band.} containing 0s. The pilot vector of the $u^{th}$ user is denoted by $\mathbf{p}_u\in \mathbb{C}^{L_p\times 1}$ and is placed along the delay index $\bar{l}$ maintaining adequate guard band from the data part.
% All users overlap over the same resources for pilot symbols which follow the same sparse structure of their corresponding data codewords.  
The symbols in the Tx grid  of the $u^{\text{th}}$ user can be described as 
\begin{equation}
{x}_u[k,l]=\begin{cases}
{c}_{n,i}^u &0\leq l<\bar{l}-l_\tau, \ k=[(n-1)K+i]_N, \;\;
\text{(data symbol)}\\
p_{u,i}&l=\bar{l}\ , \ k=i<L_p \;\;\;\;\;\;\;\;\;\;\;\;\;\;\;\;\;\;\;\;\;\;\;\;\;\;\;\; \;\; \text{(pilot symbol)}\\
0&l=\bar{l}\ , \ k>L_p-1 \;\;\;\;\;\;\;\;\;\;\;\;\;\;\;\;\;\;\;\;\;\;\;\;\;\;\;\;\;\; \text{(guard band)}\\
0& \text{otherwise.} \;\;\;\;\;\;\;\;\;\;\;\;\;\;\;\;\;\;\;\;\;\;\;\;\;\;\;\;\;\;\;\;\;\;\;\;\;\; \;\;\;\;\;\text{(guard band)} 
\end{cases}
\label{eq:prop_tx}
\end{equation}
\begin{exm}\label{eg:params}Consider  an OTFS-SCMA system with $M=5$, $N=8$, $J=6$, $K=4$, $d_v=2$, $l_{\tau}=k_{\nu}=1$, and $L_p=4$. The codewords follow the sparsity pattern as per the following factor matrix:
		\begin{equation}
		\mathbf{F} = 
		\begin{bmatrix}
		1&0&1&0&1&0\\
		0&1&1&0&0&1\\
		1&0&0&1&0&1\\
		0&1&0&1&1&0
		\end{bmatrix}.
		\label{eq:fact_graph}
		\end{equation}
		The pilot vector is placed at the delay tap $\bar{l}=3$.
		This example will be revisited multiple times to illustrate various concepts and procedures. 
\end{exm}

\begin{figure}[!htbp]
	\subfloat[Tx DD grids of User 1 ($\tilde{\mathbf{X}}_1$) and User 6 ($\tilde{\mathbf{X}}_6$)]{\includegraphics[scale=1]{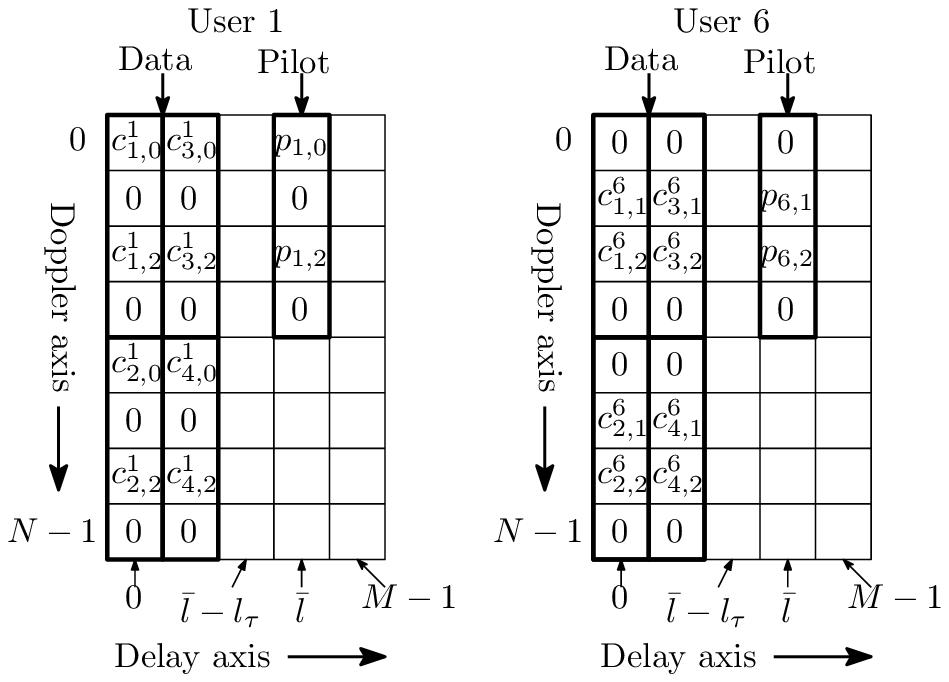}
		\label{fig:prop_tx}
	}
	\subfloat[Rx DD grid ($\mathbf{Y}$)]{\includegraphics[scale=1]{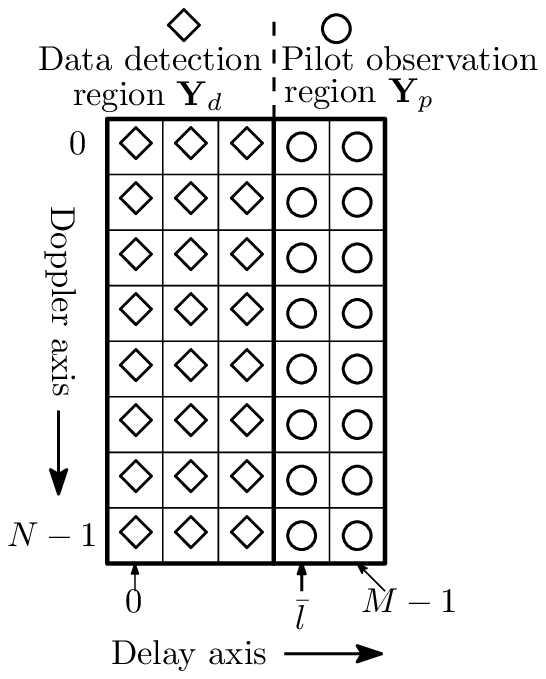}
		\label{fig:prop_rx}}
	\caption{Pilot-data arrangement in DD domain for Example~\ref{eg:params} (For convenience,  pilot vectors of length 4 are shown.  The selection of the length of pilot vectors is discussed in Section~\ref{subsec:optimal_length}.)  }
	\label{fig:tx_rx_grid}
\end{figure}

Fig.~\subref*{fig:prop_tx} shows the embedded arrangement of  the data and the pilot vectors over the  Tx DD grid   for Example~\ref{eg:params}.  The codewords and the pilot vectors follow the sparsity pattern as per (\ref{eq:fact_graph}).   The Tx grids for only User 1 and  User 6 are shown. 
%The two non-zero components of $\mathbf{x}_{u,s}$ are given by ${x}^{(1)}_{u,{s}}$ and ${x}^{(2)}_{u,{s}}$. Since the pilot vectors also follow the same sparse structure of the codeword, there are $\frac{L_p}{2}$ non-zero components in the pilot vector, denoted by ${p}^{(1)}_{u},\ {p}^{(2)}_{u},\ \ldots {p}^{\left(\frac{L_p}{2}\right)}_{u}$. For Example~\ref{eg:params}, the two non-zero components of the pilot vector are ${p}^{(1)}_{u}$  and ${p}^{(2)}_{u}$. The optimal $L_p$ for effective sparse signal recovery is discussed in Remark~\ref{rem:optimal_Lp}.
%\begin{wrapfigure}[3]{t}{.35\textwidth}
	%\vspace*{-.5in}
%\end{wrapfigure}
%We consider the factor matrix as given in (\ref{eq:fact_graph}).
%, the pilot-data arrangement of user 1 and user 6 are as shown in Fig.~\subref*{fig:prop_tx}. Note that although the user's data $s$  can be anything from the set ${\mathbb{A}}=\left\{1,2,3,4\right\}$, we have  assumed that the data $\left\{s_n\right\}^{4}_{n=1}$ are $1, 2,3,$  and $4$ in the DD frame shown in Fig.~\subref*{fig:prop_tx}. 
%All users follow the same delay-Doppler pattern, following their corresponding SCMA pattern for data and pilot. 
%The symbols in the Rx grid can be described as
%\begin{equation}
%y[k,l]=\begin{cases}
%y_d[k^\prime,l^\prime]&0\leq l=l^\prime<\bar{l} , \ k=k^\prime \ \;\;\;\;\;\;\;\;\;\;\;\;\;\;\;\;\;\;\text{(data detection region)}\\
%y_p[k^\prime,l^\prime]&\bar{l}\le l=\bar{l}+l^\prime \le M-1 , \ k=k^\prime
%\ \ \;\;\text{(pilot observation region)}
%\end{cases}
%\label{eq:prop_rx}
%\end{equation}
The received symbols $\mathbf{Y}\in \mathbb{C}^{N\times M}$ is the collection of $M$ columns as $[\mathbf{y}_0 \ \mathbf{y}_1 \ldots \mathbf{y}_{M-1}]$.  
As a result of the sufficient guard band there are two non-overlapping regions:  (1)~data detection region $\mathbf{Y}_d\in \mathbb{C}^{N\times \bar{l}}$ formed by the columns $\mathbf{y}_0, \mathbf{y}_1, \cdots, \mathbf{y}_{\bar{l}-1}$ and  (2)~pilot observation region $\mathbf{Y}_p\in \mathbb{C}^{N\times (l_\tau +1)}$ formed by the columns $\mathbf{y}_{\bar{l}}, \mathbf{y}_{\bar{l}+1}, \cdots, \mathbf{y}_{\bar{l}+l_\tau}$ . 
For Example~\ref{eg:params},  the Rx grid  with multi-user and DD interference is  shown  in Fig.~\subref*{fig:prop_rx}. 
\vspace{-0.1in}
\subsection{CSC Model for Channel Estimation}\label{subsec:ce_csc}
The channel estimation task can be carried out by considering it as a CSC problem.  For that,  the input-output relation of OTFS-SCMA is analyzed for the single-user case first and then for the multi-user one.\\
\textbullet \hspace{5pt}\textit{{Single user case:}} Let ${h}_{u}{[k^\prime,l^\prime]}$ denote the channel coefficient of the $u^{\text{th}}$ user for the path having delay tap  $l^\prime$ and Doppler tap  $k^\prime$ where $0\le l^\prime \le M-1$  and $ 0\le k^\prime \le N-1$. Ignoring the AWGN at the BS, the input-output relation for the $u^{\text{th}}$  user from  (\ref{eq:otfs_basic})  can  be written as
\begin{equation}
	y_u[k,l] = \sum_{ k^\prime=0}^{N-1}\sum_{ l^\prime=0}^{M-1}{h}_{u}{[k^\prime,l^\prime]}{x}_u\left[\left[k-k^\prime\right]_N,[l-l^\prime]_M \right].
	\label{eq:otfs_basic1}
\end{equation}
The number $P$ of multipaths  is the number of $(k^\prime,l^\prime)$ pairs for which ${h}_{u}{[k^\prime,l^\prime]}\neq 0$. %The channel estimation refers to the determination of  the values of $P$, $\tilde{h}_{u}{[k^\prime,l^\prime]}$s, $k^\prime$s, and $l^\prime$s in (\ref{eq:otfs_basic1}). 
Let  ${\mathbf{y}}_{u,{l}}=\left[y_u[0,l] \ y_u[1,l] \ \ldots \ y_u[N-1,l]\right]^T$ and ${\mathbf{h}}_{u,m} =\left[{h}_{u}{[0,m]} \ {h}_{u}{[1,m]} \ \ldots \ {h}_{u}{[N-1,m]}\right]^T$. Then, the input-output relation in (\ref{eq:otfs_basic1}) can be written in the vector form as
\begin{equation}
	\mathbf{y}_{u,{l}} = \sum_{l^{\prime}=0}^{M-1}{{\mathbf{x}}_{u,{[l-l^\prime]_M}}\circledast {\mathbf{h}}_{u,{l^\prime}}}\ ,\ \ \text{for } l=0,1,\ldots M-1 .
	\label{eq:io_conv}
\end{equation}
For the purpose of channel estimation, we apply the following  limits  for  $l^\prime$  and $l$ in (\ref{eq:io_conv}): 
\begin{itemize}
	\item $0\leq l^\prime \leq l_\tau$, since ${\mathbf{h}}_{u,{l^\prime}}=\mathbf{0}$ for $l^\prime > l_\tau$.
	\item $\bar{l}\leq l \leq M-1$, since we are interested in the pilot observation region $\bf{Y}_p$ only.
\end{itemize}
Thus (\ref{eq:io_conv})  is simplified as
	\begin{equation}
	\mathbf{y}_{u,{l}} = \sum_{l^{\prime}=0}^{l_\tau}{{\mathbf{x}}_{u,{[l-l^\prime]_M}}\circledast {\mathbf{h}}_{u,{l^\prime}}}\ ,\ \ \text{for } l=\bar{l},\ \bar{l}+1,\ldots M-1 .
	\label{eq:io_conv_ce}
\end{equation}
The matrix structure of (\ref{eq:io_conv_ce}) is given below:
\begin{equation}\left[
		\begin{array}{c}
			\mathbf{y}_{u,{\bar{l}}}\\
			\mathbf{y}_{u,{\bar{l}+1}}\\
			\vdots\\
			\mathbf{y}_{u,{M-1}}\\
		\end{array}\right]
		=
		\left[\begin{array}{cccc}
			\mathbf{P}_{u}&\mathbf{0}_{\tiny{N\times N}}&\ldots&\mathbf{0}_{\tiny{N\times N}}\\
			\mathbf{0}_{\tiny{N\times N}}&\mathbf{P}_{u}&\ldots&\mathbf{0}_{\tiny{N\times N}}\\
			\vdots&\vdots&\ddots&\vdots\\
			\mathbf{0}_{\tiny{N\times N}}&\mathbf{0}_{\tiny{N\times N}}&\ldots&\mathbf{P}_{u}\\
		\end{array}\right]
		\left[\begin{array}{c}
			{\mathbf{h}}_{u,0}\\
		{\mathbf{h}}_{u,1}\\
			\vdots\\
			{\mathbf{h}}_{u,{l_\tau}}\\
		\end{array}\right]
	\label{eq:su_csc}
\end{equation}
where $\mathbf{P}_{u} \in {\mathbb{C}}^{N \times N}$ is the circulant matrix formed by the pilot vector ${\bf{p}}_u$ as explained in the following. For any given $\mathbf{p}_{u}$, let $\mathbf{p}_{u}^{(k)}$ denote the   transpose of the $k^{\text{th}}$ forward circular shift of $[{\mathbf{p}^T_{u}} \ \mathbf{0}_{1\times {(N-L_p)}}]$, corresponding to the Doppler tap $k$. Then $\mathbf{P}_{u}$ is given by $\left[\mathbf{p}_{u}^{(0)} \ \mathbf{p}_{u}^{(1)} \ \ldots  \ \mathbf{p}_{u}^{(N-1)}\right]$. Considering the block diagonal structure in (\ref{eq:su_csc}), we have $l_\tau+1$ independent equations as 
\begin{equation}
	\mathbf{y}_{u,l} = \mathbf{P}_{u}\mathbf{h}_{u,{l-\bar{l}}} \ \ \text{for } l=\bar{l}, \ \bar{l}+1,\ldots , M-1 .
	\label{eq:ce_simple}
\end{equation}  
\begin{rem}
	For the pilot and data arrangement   in Section~\ref{subsec:data_pilot},   (\ref{eq:ce_simple}) is applicable to all users  as sufficient guard band is reserved considering  the maximum delay spread of the channels.  
\end{rem}

\textbullet \hspace{5pt} \textit{{Multi user case:}} Now, we extend (\ref{eq:ce_simple}) to the OTFS-SCMA uplink with $J$  users. The pilot observation signal at the BS is given by
\begin{align}
	\mathbf{y}_{l} &=  \mathbf{y}_{1,l}+ \mathbf{y}_{2,l}+ \cdots +\mathbf{y}_{J,l} =\mathbf{P}_{1}\mathbf{h}_{1,{l-\bar{l}}} + \mathbf{P}_{2}\mathbf{h}_{2,{l-\bar{l}}}+ \ldots +\mathbf{P}_{J}\mathbf{h}_{J,{l-\bar{l}}} \label{eq:io_relation_uplink}\\
	&={[\mathbf{P}_{1}\; \mathbf{P}_{2} \ldots \mathbf{P}_{J}]}{\left[\begin{array}{c}
			\mathbf{h}_{1,{l-\bar{l}}}\\
			\mathbf{h}_{2,{l-\bar{l}}}\\
			\vdots\\
			\mathbf{h}_{J,{l-\bar{l}}}
			%\label{eq:ce_simple_all}
		\end{array}\right] } %&=\mathbf{P}\mathbf{h}_{l-\bar{l}}=\sum_{u=1}^{J}\mathbf{p}_u\circledast\mathbf{h}_{l-\bar{l}} \\
=\sum_{u=1}^{J}\left({\sum_{k^\prime = 0}^{N-1}}{\mathbf{p}^{(k^\prime)}_{u}h_{u}{[k^\prime,l-\bar{l}]}}\right)  \nonumber \\%\ \ \;\;\;\text{for } l=\bar{l}, \ \bar{l}+1,\ldots , M-1 
%\label{eq:ce_red_full}
& =\sum_{k^\prime = 0}^{N-1}\left({\sum_{u=1}^{J}}{\mathbf{p}^{(k^\prime)}_{u}h_{u}{[k^\prime,l-\bar{l}]}}\right)\ \ \;\;\;\text{for } l=\bar{l}, \ \bar{l}+1,\ldots , M-1
\label{eq:rearrange}
\end{align}

\begin{figure}[h]
	\includegraphics[scale=1]{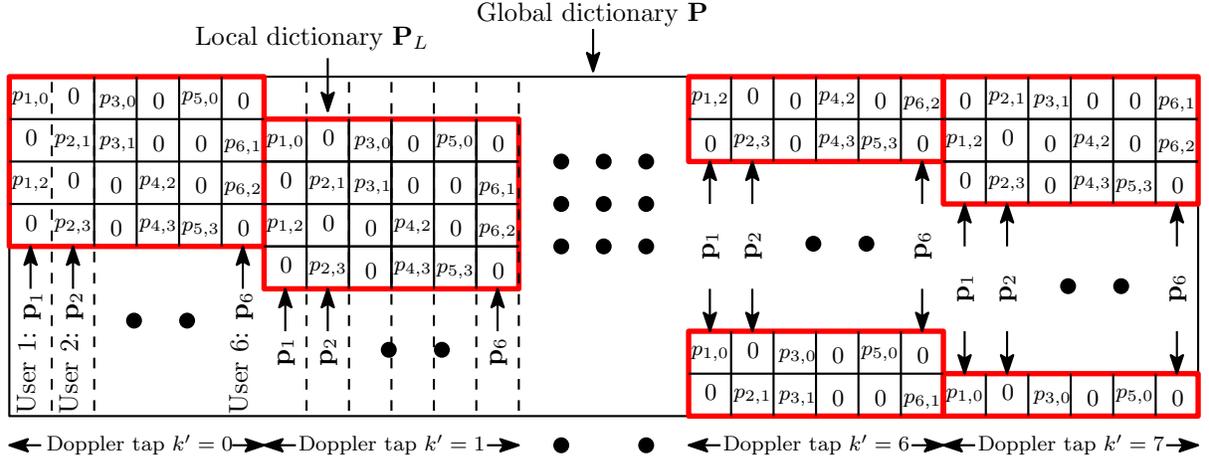}
	\captionof{figure}{Structure of   global dictionary ${\bf{P}}$ and local dictionary ${\bf{P}}_L$ for Example~\ref{eg:params}.}
	\label{fig:ce_csc_1}
\end{figure}

Expressing (\ref{eq:rearrange}) in the matrix form, we get 
\begin{equation}
	\mathbf{y}_{l}=\mathbf{P} {\bf{h}}_{l-\bar{l}}\ \ \;\;\;\text{for } l=\bar{l}, \ \bar{l}+1,\ldots , M-1
	\label{eq:rearrange_mat}
\end{equation}
%Recalling the notation $\mathbf{p}_{u}^{(k)}$ as the transpose of $k^{th}$ forward circular shift of $[{\mathbf{p}^T_{u}} \ \mathbf{0}_{1\times {(N-L_p)}}]$, 
%Let $\Delta$ denotes an $NJ\times J$ matrix such that $\Delta[m,n]= 1$, if $m=Jn$ and for $n=0,1,\ldots, N-1$. $\Delta^{(k)}$ denotes the $k^{th}$ forward cyclic shift of the columns of $\Delta$. Then $\bar{\Delta}=\left[ \Delta \ \ \Delta^{(1)} \ldots \Delta^{(N-1)} \right]$. Then the permutation matrix is given by $\Delta_{\pi}=\mathbf{1}_{N\times 1}\otimes\bar{\Delta}$. Hence we have $\mathbf{P}_G = \mathbf{P} \Delta_{\pi}$.
%This permutation can also be explained as, from (\ref{eq:ce_red_full}) we have 
%\[{\bf{P}} = \left[ {\underbrace {{\bf{p}}_1^{(0)}\;{\bf{p}}_1^{(1)} \ldots \;\;{\bf{p}}_1^{(N - 1)}}_{{\rm{User}}\;\;{\rm{1}}}\;\;\underbrace {{\bf{p}}_2^{(0)}\;{\bf{p}}_2^{(1)} \ldots \;{\bf{p}}_2^{(N - 1)}}_{{\rm{User}}\;\;2} \ldots \;\underbrace {{\bf{p}}_J^{(0)}\;{\bf{p}}_J^{(1)} \ldots \;{\bf{p}}_J^{(N - 1)}}_{{\rm{User}}\;\;J}} \right].\]
% The columns of $\bf{P}$ are permuted such that we get 
where $\mathbf{P}  \in {\mathbb{C}}^{N \times JN}$  is given by
 \[{{\bf{P}}} = \left[ {\underbrace {{\bf{p}}_1^{(0)}\;{\bf{p}}_2^{(0)} \ldots \;{\bf{p}}_J^{(0)}}_{{\rm{Doppler}}\;\;{\rm{tap}}\;\;k^\prime = 0}\;\underbrace {{\bf{p}}_1^{(1)}\;{\bf{p}}_2^{(1)}\; \ldots \;{\bf{p}}_J^{(1)}}_{{\rm{Doppler}}\;\;{\rm{tap}}\;\;k^\prime = 1} \ldots \;\underbrace {{\bf{p}}_1^{(N - 1)}\;\;{\bf{p}}_2^{(N - 1)}\; \ldots \;{\bf{p}}_J^{(N - 1)}}_{{\rm{Doppler}}\;\;{\rm{tap}}\;\;k^\prime = N - 1}} \right]\]
% where $\mathbf{P}_G$ follows a block circulant matrix as shown in Fig.~\ref{fig:ce_csc_1} and the channel coefficient vector is given by
and $\mathbf{{h}}_{l^\prime}$  is given by
\[
 \begin{aligned}
\mathbf{{h}}_{l^\prime} &= 
%\left[ \underbrace {{{ h}_{1}{[0,l^\prime]}}\  \ldots  \ {{ h}_{J}{[0,l^\prime]}}}_{{\rm{Doppler}}\;\;k^\prime = 0}\ \cdots   \ \underbrace {{ h}_{1}{[k_\nu,l^\prime]}\ldots  \ {{ h}_{J}{[k_\nu,l^\prime]}}}_{{\rm{Doppler}}\;\;k^\prime = k_\nu}\; \ \underbrace {{{\bf{0}}_{1 \times J\left( {N - (2k_\nu+1)} \right)}}}_{{k_\nu}< k^\prime < (N-k_\nu)}\right. \\ &\left.\qquad \qquad \  \underbrace {{{ h}_{1}{[N-k_\nu,l^\prime]}} \  \ldots  \ {{ h}_{J}{[N-k_\nu,l^\prime]}}}_{{\rm{Doppler}}\;\;k^\prime = (N-k_\nu)}\ \cdots \ \underbrace {{{ h}_{1}{[N-1,l^\prime]}} \  \ldots  \ {{ h}_{J}{[N-1,l^\prime]}}}_{{\rm{Doppler}}\;\;k^\prime = (N-1)} \right]^T\\
 	 \left[ {{{\boldsymbol{\alpha }}_{(0,l^\prime)}^T 
 		 \ }\cdots \ {{\boldsymbol{\alpha }}_{(k_\nu,l^\prime)}^T} \ {{{\bf{0}}}_{1 \times J\left( {N - {(2k_\nu+1)}}\right)}}  \ {{{\boldsymbol{\alpha }}_{(N-k_\nu,l^\prime)}^T} \cdots {{\boldsymbol{\alpha }}_{(N-1,l^\prime)}^T} }}\right]^T
 \end{aligned}
 \]
 where ${{\boldsymbol{\alpha }}_{\left( {k^\prime,l^\prime} \right)}} ={\left[ {{h_1}\left[ {{k^\prime },{l^\prime }} \right],{h_2}\left[ {{k^\prime },{l^\prime }} \right], \ldots ,{h_J}\left[ {{k^\prime },{l^\prime }} \right]} \right]^T}$.  Note that we have put 0s in $\mathbf{{h}}_{l^\prime}$ by taking into account both positive and negative Doppler values.
 
 The local dictionary associated with ${\bf{P}}$  is given by
 ${{\bf{P}}_L} = \left[ {{{\bf{p}}_1}\;\;{{\bf{p}}_2}\; \cdots \;{{\bf{p}}_J}} \right]$.  
 Fig.~\ref{fig:ce_csc_1} shows the structures of the global dictionary ${\bf{P}}$  and the local dictionary ${\bf{P}}_L$  for Example~\ref{eg:params}.
Clearly (\ref{eq:rearrange_mat}) can be considered as a CSC problem similar to that of (\ref{eq:csc_2}) and (\ref{eq:csc_slice}). The channel estimation is now formulated as
\begin{equation}
	\begin{rcases*}
		\mathop {\min }\limits_{{\mathbf{{h}}}_{l-\bar{l}}}{f(\mathbf{{h}}_{l-\bar{l}})} \text{ s.t } \mathbf{P}\mathbf{{h}}_{l-\bar{l}}=\mathbf{y}_{l} \;\;\;\;\;\;\; \text{ or } \label{eq:ce_conv1} & \\
		\mathop {\min }\limits_{\boldsymbol{\alpha}_{(k,l-\bar{l})}}{f(\boldsymbol{\alpha}_{(k,l-\bar{l})})} \text{ s.t } \sum_{k \in {\cal{K}}}\mathbf{R}_k^{T}\mathbf{P}_L\boldsymbol{\alpha}_{(k,l-\bar{l})}=\mathbf{y}_{l}\; & 
	\end{rcases*}  \stackanchor{l=\bar{l}, \ \bar{l}+1,\ldots , M-1 \text{ and }}
{{\cal{K}}=\{0,\ldots,k_\nu,N-k_\nu,\ldots,N-1\}} 
%	\label{eq:csc_ce_final}
\end{equation}
where  $\mathbf{R}_k$ is the same patch extraction matrix as in (\ref{eq:csc_slice}).  
\section{Algorithm for channel estimation based on CSC}\label{subsec:ce_algorithm}
%%The proposed channel estimation algorithm is explained in Section~\ref{subsec:ce_algorithm}.
%The abundance  of the sparsity feature in signal processing applications has resulted in the evolution of a group of algorithms classified as sparse signal recovery algorithms \cite{cs_review}. Amongst them, OMP is perceived as a powerful method in terms of the sparsity of the recovered signal and minimum estimation error. It is well known that OMP  selects only one dictionary component in each iteration.  %The stopping criterion solely depends on the residual norm ($||\mathbf{y}_{p_l}-\mathbf{P}_L \hat{\bar{\mathbf{h}}}_l||_{2}\le \epsilon$) after each iteration. 
%The dictionary elements, once selected, cannot be discarded in subsequent iterations to choose  more prominent ones. %However for the channel estimation problem, merely considering the residual norm for convergence result in high probability of error  for path detection. 
%Reference \cite{ce_oma} demonstrated  the ill effect of this issue by presenting  the performance degradation of OMP in the  channel estimation for OTFS-OMA. To overcome this drawback of OMP, we need to focus on an algorithm that does not commit to the list of selected dictionary columns after an iteration and can refine the list later. 
In this section, we propose an efficient sparse signal recovery algorithm for the CSC-based approach of channel estimation problem given in (\ref{eq:ce_conv1}). We explore the features of sparse pilot vector design  to ensure fast convergence and improved performance.   Algorithm~\ref{algo:ce_alg} shows the steps for the CSC-based channel estimation technique.

\subsection{Low Complexity and Fast Convergence Sparse Signal Recovery Algorithm}
%Hence we adopt the subspace pursuit \cite{SP} as the basic algorithm for sparse signal recovery  in the proposed method of channel estimation. Algorithm~\ref{algo:ce_alg} is presented for the simplest form of CSC problem of (\ref{eq:ce_conv1}), for one delay tap. 
%  Section~\ref{subsec:alg_analysis} presents a detailed analysis of the  proposed method concerning  optimal pilot vector length.
The sparse signal recovery algorithms based on greedy approach are extensively used for channel estimation problems. Algorithm~\ref{algo:ce_alg} is designed based on  SP with a modified method for obtaining the initial estimates to achieve faster convergence.  
\begin{algorithm}[!htbp]
	\captionof{algocf}{\small SP algorithm with modified initial estimate}
	\label{algo:ce_alg}
	\linespread{1}
	\small
	\SetKwData{Left}{left}
	\SetKwData{This}{this}
	\SetKwData{Up}{up}
	\SetKwFunction{Union}{Union}
	\SetKwFunction{FindCompress}{FindCompress}
	\SetKwInOut{Input}{input}
	\SetKwInOut{Output}{output}
	\Input{$\mathbf{Y}_p$: pilot observation region;
		$\mathbf{P}_L$: dictionary of pilots; $J$: total number of users; $k_\nu$: maximum integer Doppler tap; $\mathcal{T}$: Threshold for dominant path detection;
	}
	Obtain  $\mathbf{P}$ using the input 	$\mathbf{P}_L$\\ 
	Limit the search range of Doppler to the maximum spread of  Doppler taps of channel\;
	\ForEach{delay tap $l$ in the pilot observation region}{
		Thresholding: Check whether $l$ is a dominant delay path based on $||\mathbf{y}_l||_2 >{\cal{T}}$. If $l$ is not a dominant path then skip the following part and go to line 3 for  the  next delay tap $l+1$ \;
		\textit{Channel estimation:} 
		
		\If{$||\mathbf{y}_l||_{2}>\mathcal{T}$}{
			Initialization: iteration $i=0$\;
			Initial estimates: Estimate one dominant Doppler tap per user and the residue as in  {\bf{(A1)-(A5)}}\;
			\Repeat{stopping criteria, as in (\ref{eq:stop_crit})}{
				$i=i+1$\;
				Find the $J$ columns of $\mathbf{P}$ with high correlation with the residue as in ({\bf{B1}})\;
				Update the current temporary list $\tilde{S}^{(i)}$ of selected columns  by adding these $J$ columns  to the previous list $S^{(i-1)}$ as in ({\bf{B2)}} \;
				Find the pseudoinverse solution for the $2J$ selected columns of $\mathbf{P}$ as in ({\bf{B3}})\;
				Find ${S}^{(i)}$ and refine the channel estimates finding  the $J$ selected columns of significant coefficients as in {\bf{(B4)}} and {{\bf{(B5)}}}\;
				Update the residue using the refined channel estimates as in (\ref{eq:res_update}) of {\bf{(B6)}}\;
				
			}
			
		} 
	}
	\Output{Channel estimates of all users:  $\hat{\mathbf{{h}}}_{l-\bar{l}}$  for $l=\bar{l}, \bar{l}+1, \ldots , M-1$.}
\end{algorithm}
The steps of Algorithm~\ref{algo:ce_alg} are discussed  next, highlighting the factors crucial for  effective channel estimation.\\
\textbf{Obtaining the global dictionary using the local dictionary:} From (\ref{eq:ce_conv1}), it is observed that the sensing matrix  $\mathbf{P}$ remains the same for all the delay taps and is used as the global dictionary throughout  Algorithm~\ref{algo:ce_alg}. Hence we initially form  $\mathbf{P}$, which is simply the block circulant version of the local dictionary input $\mathbf{P}_L$ as depicted in Fig.~\ref{fig:csc_1}  for Example~\ref{eg:params}.\\
\textbf{Shrinking the global dictionary:} Although $\mathbf{P}$ is formed considering the complete Doppler taps from $0$ to $N-1$, the search range can be limited to the maximum spread of the integer Doppler taps $2k_\nu$ of the channel. We shrink  the dimension of $\mathbf{P}$  from $JN$ to $J(2k_\nu+1)$ columns (line~2 of Algorithm~\ref{algo:ce_alg}). This shrinkage in the dimension of the  sensing matrix reduces the complexity of the sparse signal recovery algorithm. %This significantly reduces the complexity and facilitates an accurate sparse signal recovery of Algorithm~\ref{algo:ce_alg}.
\\
\textbf{Thresholding of pilot observation region:} The pilot observation region accommodates the maximum delay tap $l_\tau$ of the  channel but not every delay tap starting sequentially from 0 to $l_\tau$  is present. Only a few dominant delay taps will be present in the channel. If $M_{\text{dom}}$ is  the number of dominant delay taps in  the channel, then in (\ref{eq:ce_conv1}),  we need to apply sparse signal recovery algorithm only to $M_{\text{dom}}$ equations ($M_{\text{dom}}\ll M$). A  thresholding of  $||\mathbf{y}_{l}||_{2}>\mathcal{T}$  can easily find out these $M_{\text{dom}}$ equations, since all other $\mathbf{y}_{l}$ have only the noise component.\\    
\textbf{Obtaining the initial estimates sequentially:} Instead of selecting the initial $J$ elements from the overall set of $JN$ elements, we can strategically choose the initial values based on a group-by-group greedy approach, making the convergence faster (line~8 of Algorithm~\ref{algo:ce_alg}).  There are $J$ groups of $N$ channel coefficients  corresponding to  $J$ users and $N$  Doppler taps of every user at a fixed delay tap. We select one dominant Doppler tap  per user   as given below:
\begin{enumerate}[start=1,label={(\bfseries A\arabic*):}]
	\item Initialize $S^{(0)}=[\ ]$;
	\item For each $j^{\text{th}}$ user, form $\mathbf{P}_{j}$ collecting all the columns corresponding to that user from $\mathbf{P}$.
	\item From each $\mathbf{P}_{j}$, find one index $v_j$ corresponding to the largest entry in the vector $|\mathbf{P}^{H}_{j}\mathbf{y}_{l}|$ and set $S^{(0)}=S^{(0)}\cup v_j$.
	\item After $J$ iterations of ({\bf{A2}}) and ({\bf{A3}}), collect the columns of $\mathbf{P}$ corresponding to those indices in $\mathbf{P}_{\left(S^{(0)}\right)}$.
%Using the concept of basic SP algorithm, a support set $S^{(i)}$ of size $J$ is maintained at every iteration  $i$.
%\vspace*{-.1in}
%\begin{align}
%	\mathbf{P}_j&\gets \mathbf{P}(j:J:\text{end}) \; \; ; \;
%	a_j\gets\max (|\mathbf{P}^{H}_{j}\mathbf{y}_{l}|)\; \; ; \;  S^{(0)}\gets S^{(0)}\cup s_j
%	\label{eq:initial_est}
%\end{align}
%\vspace*{-0.1in}
%\hspace*{-0.2in}where ${s_j \text{ is the index  } }$  for $\bf{P}$ at which the maximum value $a_j$ is obtained. 
% \item Once the dominant Doppler taps are detected for all users, we find the initial $J$ channel estimates: $\mathbf{P}_{_{S^{(0)}}}^{\dagger}\mathbf{y}_{l}$. 
	\item The residue  to be used in the first iteration is obtained using the initial estimates: \\$\mathbf{r}^{(0)}=\mathbf{y}_{l} - \mathbf{P}_{_{\left(S^{(0)}\right)}}\mathbf{P}_{_{\left(S^{(0)}\right)}}^{\dagger}\mathbf{y}_{l}.$
%. Let $\mathbf{P}_{_{S^{(0)}}}$ denote the  columns of $\mathbf{P}$ corresponding to $S^{(0)}$. Then the estimate and the residue are updated as given below:
%\begin{align}
%	\mathbf{h}_e & \gets \mathbf{P}_{_{S^{(0)}}}^{\dagger}\mathbf{y}_{l}\; \; ; \; 
%	\mathbf{r}^{(0)}\gets \mathbf{y}_{l} - \mathbf{P}_{_{(S^{(0)})}}\mathbf{h}_e .
%	\label{eq:initial_res}
%\end{align}
\end{enumerate}
\textbf{Iterative steps  of the algorithm}: After setting the initial estimates and the residue, the following steps are carried out iteratively from $i=1$ till the stopping criteria are not satisfied.

\begin{enumerate}[start=1,label={(\bfseries B\arabic*):}]
	\item Obtain the $J$ elements which are highly correlated to the residue of the previous iteration, i.e., the $J$ maximum correlation coefficients in the vector $|\mathbf{P}^{H}\mathbf{r}^{(i-1)}|$, and collect these $J$ column indices in $\mathbf{s}^{(i)}$.
%	\vspace*{-.1in}
%	\begin{equation}
%	\tilde{\mathbf{a}}\gets\max (|\mathbf{P}^{H}\mathbf{r}^{(i-1)}|, J)\text{ and } \mathbf{s}^{(i)}\gets \text{ indices of elements of  }\tilde{\mathbf{a}}
%	\label{eq:max_corr}
%	\end{equation}
	\item Form   the list of $2J$  elements  by including the new $J$ elements of  the current iteration to the list of the previous iteration: \quad $\tilde{S}^{(i)}= S^{(i-1)}\cup \mathbf{s}^{(i)}$.
	\item Find the nearest solution for $2J$ elements using the psuedoinverse method: 
	\begin{equation}\mathbf{h}_{_{\left(\tilde{S}^{(i)}\right)}}= \mathbf{P}_{_{\left(\tilde{S}^{(i)}\right)}}^{\dagger}\mathbf{y}_{l}.\label{eq:pseudo_2j}
	\end{equation}   
	\item Find the  $J$ maximum values in the vector $\left|\mathbf{h}_{_{\left(\tilde{S}^{(i)}\right)}}\right|$ and put those $J$ indices in the selected list ${S}^{(i)}$ of $J$ indices. Collect the columns of $\mathbf{P}$ corresponding to ${S}^{(i)}$ in $\mathbf{P}_{_{\left({S}^{(i)}\right)}}$.
	\item  Refine the $J$ channel estimates as: \hspace{0.2in} $\mathbf{h}_e=\mathbf{P}_{_{\left({S}^{(i)}\right)}}^{\dagger}\mathbf{y}_{l}$.
%	\begin{equation}S^{(i)}\gets \text{ indices of }\max (|\mathbf{h}_{_{(\tilde{S}^{(i)})}}|, J) \text{ and } \mathbf{h}_e\gets\mathbf{P}_{_{({S}^{(i)})}}^{\dagger}\mathbf{y}_{l} \label{eq:est_update}
%	\end{equation}
	\item Using the $J$ estimated coefficients, update  the residue for the next iteration:
	\begin{equation} 
		\mathbf{r}^{(i)}=\mathbf{y}_{l}-\mathbf{P}_{_{\left(S^{(i)}\right)}}\mathbf{h}_e .\label{eq:res_update}
	\end{equation}
\end{enumerate}

\textbf{Stopping criteria:} The algorithm is considered to converge to a correct solution when the  list of the  selected indices remains the same over two consecutive iterations and the norm of the residue increases for the next iteration i.e., the iteration stops when $S^{(i)}=S^{(i-1)}$  and

\begin{equation}||\mathbf{r}^{(i)}||_{2}>||\mathbf{r}^{(i-1)}||_{2}. \label{eq:stop_crit}
\end{equation}

The estimated channel coefficients $\hat{\mathbf{h}}_{l-\bar{l}}$ of the current delay tap are set to the refined estimate $\mathbf{h}_e$  of the previous iteration $i-1$.\\
The features of Algorithm~\ref{algo:ce_alg} that result in fast convergence are highlighted next.

\begin{itemize}
\item {{\textit{Sparse pilot vector and sequential initial estimates}}}: The pilot vector of each user follows the same sparse structure of the user's codewords.  SCMA codebook is designed such that only $d_f$ users share a resource. The similar sparse structure for the pilot vectors favors the sparse signal recovery since those columns of $\mathbf{P}$ will be highly correlated with $\mathbf{y}_{l}$. Also, it ensures that all the $d_v$ non-zero components of all pilot vectors are available for channel estimation. The sparseness of the pilot vectors of the multiple users provides minimal
interference amongst them. In addition, the sequential selection of the initial estimates offers uniform treatment to all users  without prioritizing any user.
%\item {{\textit{Minimal mutual coherence}}}: The exact sparse signal recovery is influenced by the mutual coherence of the dictionary $\mathbf{P}$, which is given by $\mu(\mathbf{P})=\max\limits_{i\neq j}||\mathbf{g}^{T}_i\mathbf{g}_j||$, for any two columns $\mathbf{g}_i$ and $\mathbf{g}_j$ of $\mathbf{P}$ such that $||\mathbf{g}_i||_2=1 \forall i$. A unique solution can be obtained only if the sparsity, $J< \frac{1}{2}(1+(1/\mu(\mathbf{P})))$. A smaller  $\mu(\mathbf{P})$ provides a better guarantee for the convergence of the algorithm. Observe from the structure of $\mathbf{P}$ in Fig.~\ref{fig:ce_csc_1}  that, for any $\mathbf{g}_i$ there are only $2L_p-1$ blocks of $\mathbf{P}_L$ are overlapping and will affect $\mu(\mathbf{P})$.  As $k_\nu\ll N$, by shrinking $\mathbf{P}$ to $(k_\nu+1)$ blocks,   the last $L_p-1$ blocks are  assured not  to affect  $\mu(\mathbf{P})$.  Consequently, a lower  mutual coherence  is achieved which facilitates an accurate sparse signal recovery.
\item {{\textit{Dealing with unknown sparsity}}}: The knowledge of sparsity is a prerequisite for SP algorithm. %But for the channel estimation problem, sparsity is the number of multi-paths in the wireless channel under consideration.
The dictionary $\mathbf{P} \in {\mathbb{C}}^{N\times JN}$  is used to recover the channel coefficients at the all  Doppler taps of all users for a single delay tap as shown in  (\ref{eq:ce_conv1}).
% With reference to the standard propagation channel models of \cite{eva}, consider one Doppler tap is associated with each  delay tap for all users. Hence 
The sparse signal recovery in (\ref{eq:ce_conv1}) aims to recover $J$ components from the available $JN$ components of $\mathbf{P}$. i.e., we have a fixed sparsity of $\frac{1}{N}$ for each  of the $M_{\text{dom}}$ sparse recovery problems under consideration. Thus, the unknown sparsity in the number of paths is  translated to  a known sparsity of the number $J$ of users.
\end{itemize}
\subsection{Pilot Pattern Design}
\subsubsection{Length of Pilot Vector}\label{subsec:optimal_length}
Sparse signals can be successfully  recovered   if the dictionary elements follow specific criteria \cite{cs_eldar}.  Considering these conditions and the requirements for OTFS-SCMA, we  state a condition for the length of the pilot vector in the following lemma.
\begin{lemma}\label{rem:optimal_Lp}
	For successful channel estimation using the proposed method, the  length $L_{p}$ of pilot vector must satisfy the following  condition:
	% is the smallest integer $L_p$ such that $2J\le L_p \le N-k_\nu $ and $[L_p]_K=0$. %For example,   $L_p=12$ when $\left(J=6, K=4\right)$ basic SCMA is used in OTFS-SCMA with practical values of $N$ and $k_\nu$.
	\vspace*{-.1in}
	\begin{equation}
		L_p \geq \max\left\{2J,  \left\lceil c J\log(J(2k_\nu + 1))\right\rceil -2k_\nu, k_\nu+1\right\}  \;\;\;  \text{ with } \left[L_p\right]_K=0
		\label{eq:Lp_condtion}
	\end{equation}
	where $c$  is a constant  satisfying $1<c \leq 2$.
\end{lemma}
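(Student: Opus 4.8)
The plan is to read the right-hand side of (\ref{eq:Lp_condtion}) as the pointwise maximum of three independent necessary conditions, each arising from a different source, and then to argue that a valid $L_p$ must satisfy all of them simultaneously (hence the maximum), subject to the SCMA divisibility $[L_p]_K=0$. I would therefore establish the three lower bounds separately and combine them.

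First I would pin down the effective dimensions of the recovery problem in (\ref{eq:ce_conv1}). After the dictionary is shrunk to the Doppler spread (line~2 of Algorithm~\ref{algo:ce_alg}), the sensing matrix $\mathbf{P}$ has $J(2k_\nu+1)$ columns, and the target vector $\mathbf{h}_{l-\bar l}$ is $J$-sparse, since the algorithm recovers one dominant Doppler tap per user. The crucial preliminary computation is the number of \emph{informative} measurement rows: every column of $\mathbf{P}$ is a circular shift (by $-k_\nu,\ldots,k_\nu$) of a pilot supported on $L_p$ consecutive Doppler bins, so the union of their supports occupies exactly $L_p+2k_\nu$ consecutive Doppler positions, while all remaining rows of $\mathbf{y}_l$ carry only noise. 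Hence the effective measurement count is $m=L_p+2k_\nu$.

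With $(s,n,m)=(J,\,J(2k_\nu+1),\,L_p+2k_\nu)$ fixed, I would invoke the standard compressive-sensing recovery guarantee \cite{cs_eldar}, which requires $m\geq c\,s\log n$ for some constant $c\in(1,2]$ once the incoherence/RIP conditions the pilots are designed to meet hold. Substituting and solving for $L_p$ yields $L_p\geq \lceil cJ\log(J(2k_\nu+1))\rceil-2k_\nu$, the second term; the $-2k_\nu$ offset is precisely the Doppler broadening computed above. The term $2J$ comes from SP itself: the intermediate least-squares in step~(\textbf{B3}) is posed over $2J$ candidate columns, so a unique pseudoinverse solution needs at least $2J$ linearly independent measurements, and—because the selected columns are shifts of a pilot banded over $L_p$ taps—independence of any such $2J$ columns is guaranteed by $L_p\geq 2J$. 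The floor $k_\nu+1$ is an identifiability condition requiring the pilot support to be at least the one-sided Doppler extent, so that the full two-sided spread of $2k_\nu+1$ taps can be resolved without shift-induced ambiguity; it is the mildest of the three and is typically dominated by the others. Finally, $[L_p]_K=0$ is structural: under \emph{Scheme-1} the Doppler axis is tiled in blocks of length $K$, so the pilot must fill whole SCMA resource blocks to respect the factor-matrix sparsity pattern of (\ref{eq:fact_graph}). Taking the maximum of the three lower bounds and rounding up to the nearest multiple of $K$ gives (\ref{eq:Lp_condtion}).

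The main obstacle is the rigorous justification of the effective measurement count $m=L_p+2k_\nu$ and, with it, the admissibility of the constant $c\leq 2$: one must verify that restricting attention to the $L_p+2k_\nu$ nontrivial rows preserves the incoherence/RIP properties of the shrunk dictionary $\mathbf{P}$. This is exactly where the pilot design that minimizes mutual coherence, together with the sparse, non-overlapping SCMA support structure, must be brought in, so that the generic requirement $m\geq c\,s\log n$ is actually attainable with a small constant rather than a larger universal one.
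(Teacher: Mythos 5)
Your overall architecture matches the paper's: the bound is indeed the pointwise maximum of independent necessary conditions, and two of your three derivations coincide with the paper's. The $2J$ term comes, as you say, from the pseudoinverse step (\textbf{B3}) needing $\text{rank}(\mathbf{P}^{\dagger}_{(\tilde{S}^{(i)})})\geq 2J$, hence $L_p\geq 2J$; and the logarithmic term comes from the standard compressive-sensing measurement count $V\geq\lceil cu\log U\rceil$ with the effective observation count $V=L_p+2k_\nu$ set by the Doppler spread, which is exactly the paper's argument (the paper then only sanity-checks the constant, computing $c\approx 1.4$ for the EVA parameters with $L_p=12$, rather than proving RIP for the restricted rows — so the ``main obstacle'' you flag is not actually resolved in the paper either; it is left at the level of a generic CS guarantee). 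Your reading of $[L_p]_K=0$ as a structural consequence of the codeword sparsity pattern of length $K$ also agrees with the paper.

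The genuine gap is the third term. You assert $L_p\geq k_\nu+1$ as an ``identifiability condition requiring the pilot support to be at least the one-sided Doppler extent,'' but you never derive it — there is no computation that produces $k_\nu+1$ rather than, say, $2k_\nu+1$ or $k_\nu$. The paper obtains it from a specific CSC-theoretic relation between global and local sparsity, $s\approx\frac{s_L}{2p-1}U$ (from the ``working locally, thinking globally'' framework of \cite{csc_slice}), instantiated with global sparsity $s=J$, local sparsity $s_L=1$ (one dominant Doppler tap per local patch under a uniform-Doppler assumption), dictionary-element length $p=L_p$, and signal length $U=J(2k_\nu+1)$; this gives $J\approx\frac{1}{2L_p-1}J(2k_\nu+1)$, i.e., $L_p\approx k_\nu+1$. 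Without this (or some equivalent quantitative argument), the third term in your maximum is unsupported; you should either import the local-global sparsity relation and carry out that substitution, or supply a concrete counting argument showing why a pilot shorter than $k_\nu+1$ makes the shifted supports ambiguous.
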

\begin{proof}
	The proof is given in Appendix~\ref{app:optimal_lp}.
\end{proof}
\subsubsection{Pilot Vector Sequence}\label{subsec:pilot_seq}
The design of the pilot sequence plays a crucial role in successful
channel estimation. For compressive-sensing-based techniques, the primary factor of concern is the cross-correlation among the columns of sensing matrix (dictionary) $\mathbf{P}$. In that regard, one important figure of merit is the mutual coherence of $\mathbf{P}$ which is defined as $\mu(\mathbf{P})=\max_{i\neq j}|\mathbf{p}_{(i)}^{H}\mathbf{p}_{(j)}|$ for any two columns $\mathbf{p}_{(i)}$ and $\mathbf{p}_{(j)}$; and $||\mathbf{p}_{(i)}||_{_2}=1,\ \forall \ i$.  As $\mu(\mathbf{P})$ becomes smaller, the probability of successful sparse signal recovery increases. 
Hence, we  consider the following optimization problem to design $\mathbf{P}$: 
\begin{equation}
\begin{aligned}
	&{\min}\; \; \mu(\mathbf{P})= {\min}\; \;\max_{i\neq j}|\mathbf{p}_{(i)}^{H}\mathbf{p}_{(j)}|\\
	%\text{s.t.}&\; \; \; 0.277\leq \mu(\mathbf{P})<1\nonumber\\
	&\text{s.t.}\; \; \;  ||\mathbf{p}_{(i)}||_{_2}=1,  
	\;\; \; \;\;1\leq i,j \leq JN(l_\tau+1).%\nonumber
	\end{aligned}
	\label{eq:P_opt}
	\end{equation}
%\begin{align}
%	\min & \{\max_{\forall i,j \\ i\neq j}|\mathbf{p}_{(i)}^{H}\mathbf{p}_{(j)}|\}\\
%	\label{eq:P_opt}
%\end{align}

To carry out the optimization in (\ref{eq:P_opt}), we consider the method of differential evolution \cite{DE_price_20}. Differential evolution is a robust evolutionary algorithm to solve arbitrary optimization
problems with real-valued parameters. The problem (\ref{eq:P_opt}) involves complex variables which are converted to the equivalent real ones of twice lengths to apply differential evolution.
Specific to the CSC model, there exists a lower bound for $\mu(\mathbf{P})$ depending on the local dictionary's dimensions \cite{csc_slice}. Considering the local dictionary $\mathbf{P}_L$ shown in Fig.~\ref{fig:ce_csc_1}, we have $\mathbf{P}_L\in \mathbb{C}^{L_p\times J}$ and the condition is obtained as %\begin{equation}
$\mu(\mathbf{P})\geq\sqrt{\frac{J-1}{J(2L_p-1)-1}}<1$. We consider the particular case of $J=6$ and $L_p=20$ to verify the effectiveness of the proposed pilot sequence.  
%\end{equation}
The resultant dictionary optimized by differential evolution  is referred to as `Learned pilot' and it has $\mu(\mathbf{P})\approx 0.36$.  Note that for this case, we have  $0.277\leq \mu(\mathbf{P})<1$ which shows that `Learned pilot' yields a mutual coherence value close to its lower limit.  %\label{eq:csc_mutcoh} 
Additionally, we analyze the following pilot sequences: (1) `Gaussian pilot' where the pilot symbols are i.i.d. complex Gaussian, (2) `Zadoff-Chu pilot' where  $\mathbf{p}_u[i]=\exp\left(-j\frac{\pi r i ^2}{L_p}\right)$ for the $u^{\text{th}}$ user, where $r$ is the order of the sequence and for each user, it is selected as a distinct prime number with $1\leq r <L_p$ and $0\leq i < L_p$, and (3) `SCMA cw pilot' where we consider SCMA codewords as the pilot vectors.
\begin{figure}[!htbp]
	\centering
	\includegraphics[height=8cm, width=17cm]{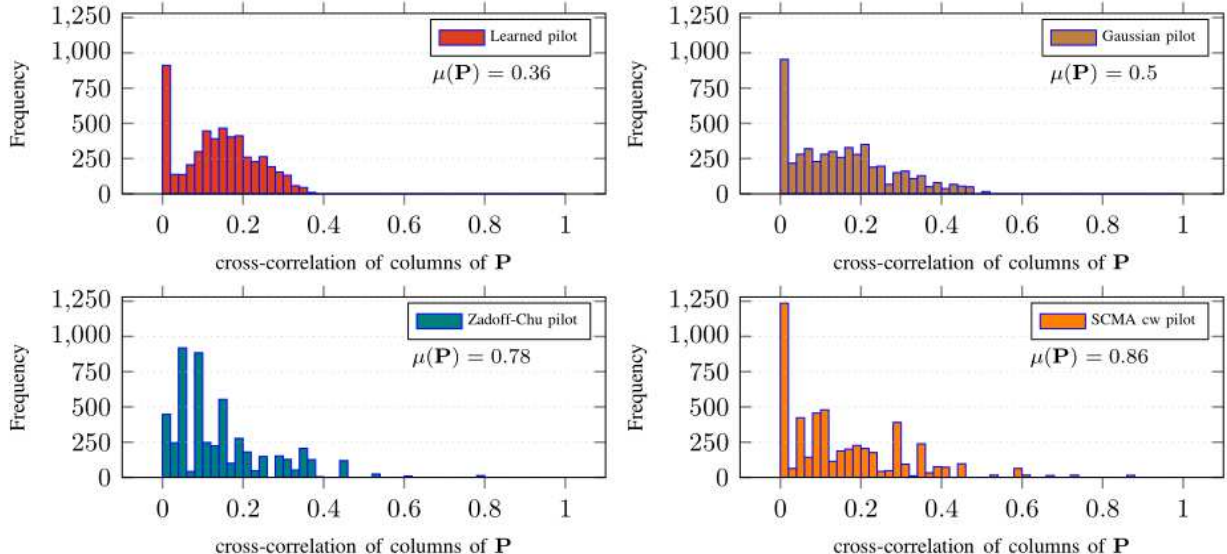}
\caption{Histogram of cross-correlation of columns of $\mathbf{P}$.}
	\label{fig:corr_P}
\end{figure}

Fig.~\ref{fig:corr_P} shows the histogram of the cross-correlation of the columns of $\mathbf{P}$ for the corresponding pilot vectors. Observe that the mutual coherence values for `Learned pilot', `Gaussian pilot', `Zadoff-Chu pilot', and `SCMA cw pilot' are 0.36, 0.5, 0.78, and 0.86 respectively. Moreover,  the spread of the cross-correlation of the columns of $\mathbf{P}$ is limited to a small range for  `Learned pilot', compared to those of  the other pilot sequences. For `SCMA cw pilot',  the spread of cross-correlation to higher values is even more. This analysis justifies the use of optimized dictionary for channel estimation. 

\subsection{Rectangular Pulse Shaping}
So far,  ideal pulse shaping  has been assumed for the proposed channel estimation technique. When rectangular pulses replace the ideal ones, the changes are reflected  only through    an additional multiplicative phase factor with the channel coefficients \cite{ravi_2018_TWC}, as shown  below:
\begin{equation}
	y_u[k,l] = \sum_{ k^\prime=0}^{N-1}\sum_{ l^\prime=0}^{M-1}{{\beta}}_{(k^\prime,l^\prime)}[k,l]{h}_{u}{[k^\prime,l^\prime]}{x}_u\left[\left[k-k^\prime\right]_N,[l-l^\prime]_M \right]
	\label{eq:otfs_basic2}
\end{equation}
where
\begin{equation}
	{{\beta}}_{(k^\prime,l^\prime)}[k,l] =\begin{cases}
		(\frac{N-1}{N})e^{j2\pi\frac{k^\prime}{N}(\frac{l-l^\prime}{M})}e^{-j2\pi(\frac{[k-k^\prime]_N}{N})}&\small{0\le l < l^\prime}\\
		e^{j2\pi\frac{k^\prime}{N}(\frac{l-l^\prime}{M})}&\small{l^\prime \le l < M}
	\end{cases}
	\label{eq:otfs_rect_ce}
\end{equation} 
Observe that neither the sparsity nor the circulant block property is affected by the rectangular pulses. Therefore, the method proposed for ideal pulse shaping can be easily extended to rectangular pulse shaping by considering the phase factor matrix  $\mathbf{\Upsilon}_{l}\in \mathbb{C}^{N\times NM} $ defined as

\begin{equation}
\mathbf{\Upsilon}_{l}[n,m]=	{{\beta}}_{(k^\prime,l^\prime)}[k,l], \ \text{for } n=k,\ m=Nl^\prime+k^\prime .\end{equation}

Now (\ref{eq:otfs_basic2})  can be written in vector form  as
\begin{align}
	 %\left(\sum_{l^{\prime}=0}^{l_\tau}{{\mathbf{x}}_{u,{[l-l^\prime]_M}}\circledast {\mathbf{h}}_{u,{l^\prime}}}\right)\odot\mathbf{\Upsilon}_{l} \nonumber \\
	\mathbf{y}_{u,{l}} &= \left(\mathbf{P}_{u}\odot\mathbf{\Upsilon}_{l}\right){\mathbf{h}}_{u,{l-\bar{l}}},\ \ \text{for } 0\le l-\bar{l}\le l_\tau, \ l=\bar{l}+l-\bar{l}
	\label{eq:io_conv_rect}
\end{align}
Note that $\mathbf{\Upsilon}_{l}$ depends only on $N$ and $M$ and is independent of a user's channel conditions. Therefore, the proposed method can be directly applied in the case of rectangular pulse, by replacing $\mathbf{P}_{u}$  by $\left(\mathbf{P}_{u} \odot \mathbf{\Upsilon}_{l}\right)$ for all users in (\ref{eq:ce_simple}).

\subsection{Complexity Analysis}\label{subsec:complexity}
%Next,  we analyze the complexity of the CSC based sparse signal recovery algorithm. 
The independent delay tap processing and the reduced search range in the  Doppler domain result in  remarkable dimension reduction for all matrix operations of Algorithm~\ref{algo:ce_alg}. 
%\vspace*{-.1in}
\begin{table}[!htpb]
	\caption{Complexity analysis of Algorithm~\ref{algo:ce_alg}.}
	\label{table:comp_comp} 
	\small
		\begin{tabular}{|c|c|}		
			\hline
			\ &Algorithm~\ref{algo:ce_alg} \\
			\hline
			Dimension of sensing matrix&$N\times J(2k_\nu+1)$\\
			\hline
			Iterations($i$)&$\mathcal{O}(\log(J))$ \\
			\hline
			{Computational-}&\multirow{2}{*}{$\mathcal{O}(M_{\text{dom}}N(J(2k_\nu+1)+J^2)i)$}\\
			complexity &\  \\
			\hline
		\end{tabular}
\end{table}

%\vspace*{-.2in}
TABLE~\ref{table:comp_comp} shows the computational complexity  based on the number of complex multiplications. In general, for the sparse signal recovery problem with a sensing matrix of dimension $U\times V$ and a sparsity of $s$, the SP algorithm's  complexity is $\mathcal{O}(U(V+s^{2}))$ per iteration.  The number of iterations needed for convergence is $\mathcal{O}(s)$ \cite{SP}.  Here we have $U=N, V=J(2k_\nu+1)$, and $s=J$. 
In TABLE~\ref{table:comp_comp}, 
% $k_\nu\ll N$ is the maximum integer Doppler tap and 
$M_{\text{dom}}\ll M$ is the  number of dominant delay taps in the channel.
%The computational complexity of basic OMP and SP algorithms are considered to perform the complexity analysis. In TABLE~\ref{table:comp_comp}, $P$ is the total number of paths of a user; $i$ is the total number of iterations needed for convergence. 
%concerning  the sparse signal recovery method mentioned in \cite{ce_oma}. The QAM-pilot based channel estimation method in \cite{ch_est1} has a different approach where channel coefficients are estimated by element-by-elemet division. Hence it is not considered for the complexity analysis. Reference \cite{ce_oma} works on the  entire pilot frame (globally), and the proposed method deals with each delay tap (locally). 
%\begin{wraptable}[3]{t}{0.5\textwidth}
%	\vspace*{-.4in}
%	\scalebox{0.55}{
%		\begin{tabular}{|c|c|c|c|}
%			\hline
%			\ &{Proposed CSC based} &\multicolumn{2}{c|}{Pilot-frame\cite{ce_oma}}\\
%			\cline{3-4}
%			\ &{SP}&Modified SP&OMP\\
%			\hline
%			Sensing matrix&$N\times JN_p$&$NM\times JNM$&$NM\times JNM$\\
%			{Computational-}&\multirow{2}{*}{$\mathcal{O}(M_{p,\text{dom}}N(JN_p+J^2)i)$}& \multirow{2}{*}{$\mathcal{O}(NM(JNM+P^2)i)$}&\multirow{2}{*}{$\mathcal{O}(PJ(NM)^{2}i)$}\\
%			complexity &\ &\ &\ \\ \hline
%			Iterations($i$)&$\mathcal{O}(\log(J))$&$\mathcal{O}(\log(P))$&$\mathcal{O}(P)$\\
%			\hline
%		\end{tabular}
%	}
%	\vspace*{-.1in}
%	\captionof{table}{Complexity analysis of sparse signal recovery algorithm.}
%	\label{table:comp_comp} 
%\end{wraptable}
%Note that the proposed  technique has resulted in a notable reduction of computational complexity over the conventional methods in \cite{ce_oma}.
\begin{rem}
	Channel estimation based on the sparse signal recovery algorithms using one complete OTFS frame of full-length pilot vector has a sensing matrix of dimension $NM\times JNM$. This non-embedded scheme  results in higher complexity than Algorithm~\ref{algo:ce_alg}, in terms of the number of complex multiplications.
\end{rem}
\subsection{Cramer-Rao Lower Bound}\label{sec:crlb_limit}
%\vspace{-0.2in}
The CRLB for the estimates of the channel coefficients  is presented here. We consider the  pilot symbols' arrangement as discussed in Section~\ref{subsec:data_pilot}. Writing (\ref{eq:rearrange_mat}) in matrix form, we get
\begin{equation}\left[ {\begin{array}{*{20}{c}}
	{{{\bf{y}}_{\bar l}}}\\
	{{{\bf{y}}_{\bar l + 1}}}\\
	\vdots \\
	{{{\bf{y}}_{M - 1}}}
	\end{array}} \right] = \left[ {\begin{array}{*{20}{c}}
	{{{\bf{P}}}}&{\bf{0}}& \cdots &{\bf{0}}\\
	{\bf{0}}&{{{\bf{P}}}}&{\cdots}&{\bf{0}}\\
	\vdots & \cdots & \ddots & \vdots \\
	{\bf{0}}&{\bf{0}}& \cdots &{{{\bf{P}}}}
	\end{array}} \right]\left[ {\begin{array}{*{20}{c}}
	{{{\bf{h}}_0}}\\
{{{\bf{h}}_1}}\\
	\vdots \\
	{{{\bf{h}}_{M-1-\bar l}}}
	\end{array}} \right]
\label{eq:diagonal_final}
\end{equation}
Writing (\ref{eq:diagonal_final})  compactly, we get 
%\vspace{-0.25in}
\begin{equation}
{{\bf{y}}^p} ={\mathbf{P}}^{\text{diag}}\bf{h}
\label{eq:observation_vec}
\end{equation}
where ${{\bf{y}}^p}=\text{vec}\left({\mathbf{Y}}_p\right) \in {\mathbb{C}}^{N\left(M-\bar l\right) \times 1}$  contains the received symbols in the pilot observation region, ${\mathbf{P}}^{\text{diag}} =\text{diag} \left({\mathbf{P}}\right) \in \mathbb{C}^{N\left(M-\bar l\right) \times JN\left(M-\bar l\right)} $,  and ${\bf{{h}}} \in {\mathbb{C}}^{JN\left(M- \bar l\right)}$  contains the channel coefficients of all users. As the number of multi paths of each user is $P$, $\bf{{h}}$ contains $JP$ non-zero elements.  Suppose the location  of  these non-zeros are $c_1, c_2, \ldots, c_{JP}$. For deriving the CRLB, we assume that these locations are known. %Suppose the corresponding channel coefficients  are $\text{\Fontauri\slshape{h}}_1,\text{\Fontauri\slshape{h}}_2, \ldots, \text{\Fontauri\slshape{h}}_{JP}$.  
Let the $e^{\text{th}}$  non-zero element of $\bf{h}$  be denoted by $\theta_e$  ($c_e^{\text{th}}$ element of $\bf{{h}}$).
The parameter vector is given by  $\boldsymbol{\theta}=\left[\theta_1, \theta_2, \ldots, \theta_{JP}\right]^T$  which are to be estimated.   Let $p^{\text{diag}}\left[n,c\right]$ denote the element in the $n^{\text{th}}$  row and $c^{\text{th}}$  column of ${\mathbf{P}}^{\text{diag}} $ and $y^p_n$ denote  the $n^{\text{th}}$  element of ${\bf{y}}^p$. Then considering the AWGN at the BS,  (\ref{eq:observation_vec})  can be alternatively written as
$$
y_n^p =s_n+w_n,   \;\;\;\;\; \;\; n=1, 2, \ldots , N(M-\bar l)
\vspace{-0.1in}
$$
where $w_n \sim {\mathcal{CN}}\left(0,\sigma^2\right)$ and $s_n$  is given by
\begin{equation}
s_n=\sum\limits_{e = 1}^{JP} {{p^{{\rm{diag}}}}\left[ {n,c_e} \right]{\theta _e}}  \label{eq:s_l}.
\end{equation}
For  rectangular pulses, we have
\begin{equation}
{s_n} = \sum\limits_{e = 1}^{JP} {{p^{{\rm{diag}}}}\left[ {n,{c_e}} \right]\gamma \left[ {n,{c_e}} \right]{\theta _e}}   \label{eq:s_l_rect}
\end{equation}
 where $\gamma \left[ {n,{c_e}} \right]$ is given by ${{\beta}}_{(k^\prime,l^\prime)}[k,l],$ for $n=Nl+k+1$ and ${c_e}=J(Nl^{\prime}+k^{\prime})+1$.
 
As per  CRLB, the variance of any unbiased  estimator ${{\hat \theta }_i}$ is lower bounded as \cite{kay_estimation}\vspace{-0.05in}
\begin{equation}{\mathop{\rm var}} \left( {{{\hat \theta }_i}} \right) \ge {\left[ {{{\bf{I}}^{ - 1}}\left( {\boldsymbol{\theta }} \right)} \right]_{ii}}
\label{eq:crlb}
\end{equation}
where ${{\bf{I}}}\left( {\boldsymbol{\theta }} \right)$  is the $JP \times JP$  Fisher information matrix  whose element in the $i^{\text{th}}$  row and the $j^{\text{th}}$ column is given by
	\begin{equation}{\left[ {{\bf{I}}\left( {\bf{\theta }} \right)} \right]_{ij}} = \frac{1}{{{\sigma ^2}}}\sum\limits_{n = 1}^{N(M - \bar l)} {\left\{ {\frac{{\partial {s_n}}}{{\partial {\theta _i}}}\frac{{\partial s_n^*}}{{\partial {\theta _j}}} + \frac{{\partial s_n^*}}{{\partial {\theta _i}}}\frac{{\partial s_n^{}}}{{\partial {\theta _j}}}} \right\}} .
	\label{eq:fisher}
	\end{equation}	
	From (\ref{eq:s_l})  and (\ref{eq:s_l_rect}), it can be easily found that 
\begin{equation}
	\frac{{\partial {s_n}}}{{\partial {\theta _i}}}=
	\begin{cases}
		 {p^{{\rm{diag}}}}\left[ {n,{c_i}} \right]  & {\text{ for ideal pulse}}\\
		{p^{{\rm{diag}}}}\left[ {n,{c_i}} \right]\gamma \left[ {n,{c_i}} \right]   & {\text{ for rectangular pulse}}
	\end{cases}
	\label{eq:partial_dev}
\end{equation}
Similarly ${\frac{{\partial s_n^*}}{{\partial {\theta _i}}}}$ can be found out. From (\ref{eq:fisher}), the Fisher information matrix can be found out and then by using (\ref{eq:crlb}),  CRLB for the estimates of the channel  coefficients can be obtained.   Finally,  for the normalized mean square error (NMSE) analysis, we consider the average normalized CRLB:   $\left\{ {\sum\limits_{i = 1}^{JP} {{{\left[ {{{\bf{I}}^{ - 1}}\left( {\boldsymbol{\theta }} \right)} \right]}_{ii}}} } \right\}/\left\| {\boldsymbol{\theta }} \right\|_2^2$.
\section{Results and Discussions}\label{sec:simulations}
This section presents the simulation results for the proposed method of channel estimation and analyzes them. The OTFS-SCMA scheme uses an $N \times M$ DD grid of different dimensions and a basic $\left(J=6, K=4\right)$ SCMA system with 150$\%$ overloading factor.   Scheme-1 is considered where the $K \times 1$ codewords are
placed along the Doppler axis. The codebook design follows the  technique described in \cite{kuntal_codebook}. The channel conditions of all users are assumed such that only a single Doppler value is associated with a delay path \cite{ravi_2018_TWC}. Integer delay and Doppler taps are assumed in simulations. Note that the proposed method applies to the fractional cases also, albeit with an increase in complexity. For the data detection in uplink, we use the powerful single-stage MPA \cite{otfs_scma}. %In downlink, the detection is done by the concatenated blocks of LMMSE detector and MP detector \cite{otfs_scma}.  
The average pilot and data SNRs are denoted by SNR$_p = \frac{\mathbb{E}(|p_{u,i}|^2)}{\sigma^{2}}$ and SNR$_d = \frac{\mathbb{E}(|{{c}}_{n,i}^u|^2)}{\sigma^{2}}$ respectively, where ${\sigma^{2}}$ is the variance of AWGN. %We take $\sigma^{2}=1$ throughout the simulation. 
The simulation observations are presented in terms  of various performance indicators like BER, spectral efficiency, NMSE,  and complexity analysis. The proposed pilot symbols  are `Learned pilot' as discussed in Section~\ref{subsec:pilot_seq}. %The proposed channel estimation method is specifically for OTFS-SCMA, which is a NOMA technique. The channel estimation techniques described in \cite{ch_est1} and \cite{ce_oma} are for OTFS-OMA schemes. 
We extend the  methods of OTFS-OMA in \cite{ch_est1} and \cite{ce_oma}  to OTFS-SCMA for the comparison purpose and are referred to as `QAM-pilot' and `MSP' (modified subspace pursuit), respectively.  % and of the length $L_{p}=20$.% The uplink scenario's analysis and results are given in Section~\ref{subsec:sim_ul} and the downlink scenario in  Section~\ref{subsec:sim_dl}.
%\subsection{Uplink} \label{subsec:sim_ul}
%The proposed method is devised mainly for the channel estimation of the uplink OTFS-SCMA. Hence, we present the detailed numerical analysis and the simulation results for the uplink case in this section. 
\subsection{BER Analysis}
For an initial investigation, we consider a DD grid with  $N=M=32$.  The number of multi-paths is $P=2$. The sub-carrier spacing is $\triangle f=1$ Hz and the symbol duration is  $T=1$s. The maximum delay tap is taken such that $l_\tau\ll M$. For each  delay tap $l_i$, the corresponding Doppler tap $k_i$ is randomly selected from $\left\{ 0,1, \ldots, k_\nu, N-k_\nu,N-k_\nu-1,  \ldots, N-1\right\}$ where $k_{\nu}\ll N$. 
 Considering $k_{\nu}\ll 32$, we take $L_p=12$ as per Lemma~\ref{rem:optimal_Lp}.

\begin{figure}[!htbp]
	\centering
		\subfloat[Ideal pulse shaping]{\includegraphics[height=6.7cm, width=8cm]{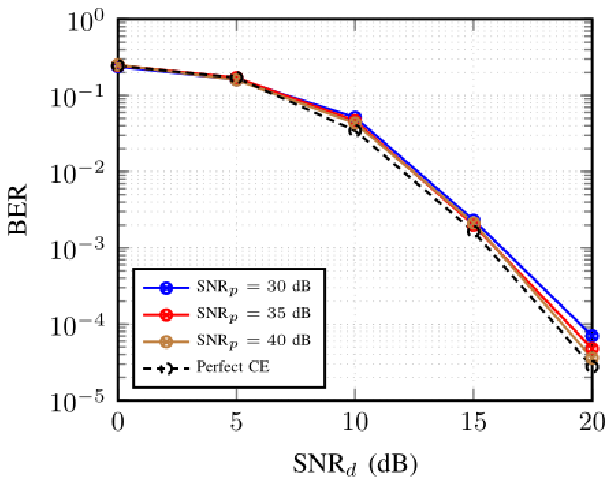} \label{fig:res_ideal}}
		\subfloat[Rectangular pulse shaping]{\includegraphics[height=6.7cm, width=8cm]{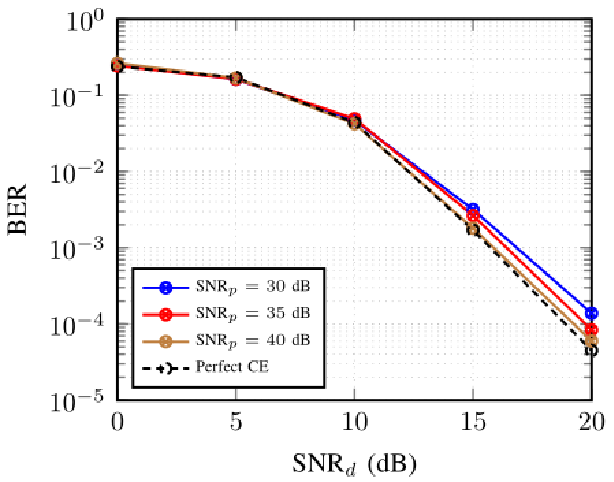}\label{fig:res_rect}		}
		\caption{\footnotesize{OTFS-SCMA uplink channel estimation using proposed method ($\Gamma_{32,32}$; $J=6,K=4$; $P=2$)}.}
			\label{fig:res_ul}
\end{figure}
 Fig.~\ref{fig:res_ul} shows the BER for three different  SNR$_p$ values of $30$, $35$, and $40$ dB  for ideal pulse and rectangular pulse. As expected, when  SNR$_p$  increases, the BER  also improves. At  SNR$_p=40$ dB, the BER nearly approaches to that of the perfect CE. %Fig.~\subref*{fig:res_ideal} gives the results for ideal pulse-shaping where the 2D convolution operation gives a perfect circulant-block structure. The BER results for rectangular pulse-shaping is presented in Fig.~\subref*{fig:res_rect} as per (\ref{eq:otfs_rect_ce}). 
 Observe from Fig.~\subref*{fig:res_rect} that  the presence of the additional phase factor in the case of rectangular pulse has no  noticeable impact on the BER performance  at SNR$_p=40$~dB.

   \begin{minipage}[b]{0.5\textwidth}
  	\begin{figure}[H]
  	\includegraphics[height=6.7cm, width=8cm]{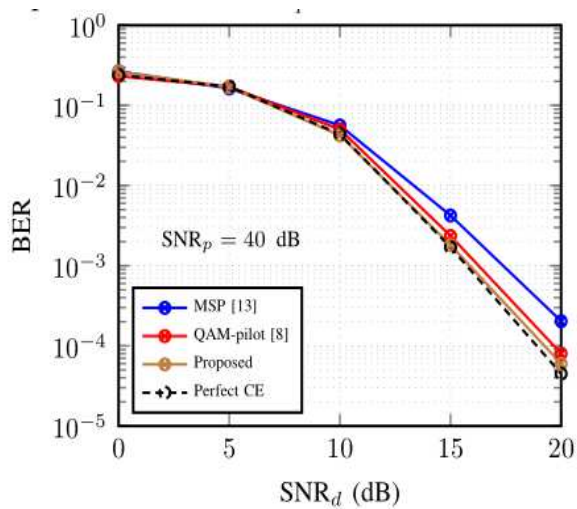}
  		\caption{\footnotesize{BER of proposed and conventional methods.}}
  		\label{fig:res_3_comp}	
  	\end{figure}
  \end{minipage}
  \begin{minipage}[b]{0.5\textwidth}
  	\begin{figure}[H]
  		\includegraphics[height=6.7cm, width=8cm]{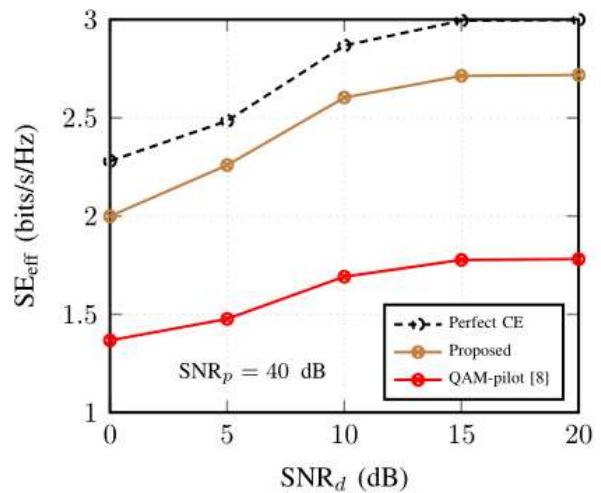}
  		\captionof{figure}{Effective spectral efficiency in uplink.}% of proposed and  conventional channel estimation methods. ($\Gamma_{32,32}$; $J=6,K=4$; $P=2$; $k_\nu=l_{\tau}=1$)}
  		\label{fig:res_5_se}
  	\end{figure}
  \end{minipage} 
 In Fig.~\ref{fig:res_3_comp}, we compare the BER performance of the proposed method to those of  `QAM-pilot' and `MSP'. The results are presented for SNR$_p=40$ dB. Observe that the proposed method gives a slightly better BER over  `QAM-pilot'. The technique of `QAM-pilot' \cite{ch_est1} uses a single QAM-pilot symbol and a thresholding-based method for detecting the paths and estimating the channel coefficients one at a time in a scalar form. On the other hand,  the proposed scheme is devised based on a sparse signal recovery algorithm with non-orthogonal pilot vectors, where the channel coefficients at a particular delay tap are estimated as a vector.  In contrast to `MSP' \cite{ce_oma}, which  is based on sparse signal recovery, we use  sparse pilot vectors which are non-orthogonal and embedded in the data frame, rather than  reserving a frame  dedicated for the orthogonal pilot vectors. The comparison in Fig.~\ref{fig:res_3_comp} shows that the proposed method of channel estimation gives an improved BER performance over `MSP'.

The sparse signal recovery of  `MSP' considers the channel coefficients of all users as a single vector. Hence, the path missed for one user will result in an additional path detected for another user to maintain the sparsity.  In effect, two users will be affected by a single path detection error, affecting the BER performance. In the proposed method, the initial estimate is taken sequentially, considering one path per user, reducing the probability of error in path detection.

 %It is evident that the non-orthogonal pilots of proposed method gives slightly better performance than the orthogonal QAM-pilot based method of \cite{ch_est1}. Also, the sparse pilot vectors of our proposed method performs better than the orthogonal and non-sparse pilot vector based channel estimation technique of \cite{ce_oma}.\\
 \subsection{Guard-band Overhead and Spectral Efficiency Analysis}\label{subsec:guardband}
Low guard band overhead is a desirable feature for any channel estimation technique. Since the proposed method uses an embedded pilot-aided structure, we compare its overhead  with that of `QAM-pilot' \cite{ch_est1} as displayed in TABLE~\ref{table:guardband}.  The method `MSP' in \cite{ce_oma} has a dedicated pilot frame for channel estimation alone and   the  guard band is not required. Hence it is not considered for the guard band overhead analysis. TABLE~\ref{table:guardband} shows the overheads for different user speeds  considering the EVA propagation model given in TABLE~\ref{table:eva_ch}. For each  delay tap $l_i$, the  Doppler shift  is generated using  Jakes' formula $\nu_i=\nu_{\max}\text{cos}(\theta_i)$, for $i=1,2\ldots 9$ and $\theta_i \sim  U\left(-\pi, \pi\right)$. The maximum Doppler shift  is given by $\nu_{\max}=\frac{f_c v_u}{(3\times 10^8)}$ where $v_u$ is the speed of the user. Moreover, we take $k_\nu= \left\lceil  \nu_{\max} NT \right\rceil$. The two cases of pilot vector assignment are studied: (1) reduced guard band  where the length of the pilot vectors is strictly less than $N$ and (2) full guard band where the pilot vectors occupy the entire Doppler axis. Observe from  TABLE~\ref{table:guardband} that the guard band overhead  for the proposed method is  independent of the number $J$ of users.  Moreover, the overhead for the proposed method is  less than that of `QAM-pilot' \cite{ch_est1}.  The contrast in the overheads of the two methods increases  as the   user speed increases. In the case of full guard band, the proposed method offers distinctively lower overhead than `QAM-pilot'.
%For a  user speed of $v_u$ kmph, the maximum Doppler tap is given by  $k_\nu=\lceil\frac{N}{\triangle f}\frac{f_c v_u}{(3\times 10^8)}\rceil$.

The reduction in guard band overhead can also be interpreted in terms of spectral efficiency (SE). We analyze the effective SE ($\text{SE}_{\text{eff}}$) of the proposed OTFS-SCMA uplink channel estimation following   \cite{scma_ce_rev}. 
$\text{SE}_{\text{eff}}$ for a single-user is given by 
$\text{SE}_{\text{eff}}=(1-\text{BER})S %\text{where, } S=\frac{N_b}{N_{RE}}
$
where, $S$ denotes the nominal SE.

\vspace*{-.1in} 
\begin{minipage}[t][5.2cm]{0.6\linewidth}
	\hspace*{-.25in}
	\vspace*{-.3in}
	\captionof{table}{Guard band overhead for uplink.}
	\label{table:guardband}
	\vspace*{.1in}
	\scalebox{0.63}{
		\begin{tabular}{|c|c|c|c|c|c|c|c|c|}
			\hline
			{UE speed} & \multicolumn{6}{c|}{$\text{Reduced guard band}^{(a)}$} & \multicolumn{2}{c|}{Full} \\
			\cline{2-7}
			{(Kmph)}& \multicolumn{2}{c|}{30} &\multicolumn{2}{c|}{120}  & \multicolumn{2}{c|}{500}& \multicolumn{2}{c|}{$\text{guard band}^{(b)}$}\\
			\cline{1-7}
			{Max. Doppler} & \multicolumn{2}{c|}{$k_\nu = 1$}&\multicolumn{2}{c|}{$k_\nu = 4$}& \multicolumn{2}{c|}{$k_\nu = 16$}&\multicolumn{2}{c|}{ \ } \\
			\cline{2-9}
			{tap $k_\nu$} &Prop. & \cite{ch_est1} & Prop. & \cite{ch_est1} & Prop. & \cite{ch_est1}&Prop. & \cite{ch_est1} \\
			\hline
			{$\text{Pilot+Guard}$}& \multirow{2}{*}{984} &\multirow{2}{*}{1168}&\multirow{2}{*}{1476} &\multirow{2}{*}{2920} &\multirow{2}{*}{3444} &\multirow{2}{*}{9928} &\multirow{2}{*}{5248}& \multirow{2}{*}{18688}  \\
			{$\text{symbols}^{\hyperlink{Ng}{1}}$ ($N_g$)}&\ & \ & \ & \ & \ & \ & \ & \ \\
			\hline
			{$\text{Data symbols}^{\hyperlink{N_d}{2}}$}&\multirow{2}{*}{96828} &\multirow{2}{*}{96552}&\multirow{2}{*}{96090} &\multirow{2}{*}{93924}&\multirow{2}{*}{93138}  &\multirow{2}{*}{83412}&\multirow{2}{*}{90432}&\multirow{2}{*}{70272} \\
			{$\left(N_{d}\right)$}&\ & \ & \ & \ & \ & \ & \ & \ \\
			\hline
			{Guard band}& \multirow{2}{*}{1.5\%} & \multirow{2}{*}{1.7\%}& \multirow{2}{*}{2.2\%} &\multirow{2}{*}{4.5\%}& \multirow{2}{*}{5.2\%} & \multirow{2}{*}{15.2\%} &\multirow{2}{*}{8\%}&\multirow{2}{*}{28\% } \\
			{overhead} $\left(\frac{N_g}{NM}\right)$&\ & \ & \ & \ & \ & \ & \ & \ \\
			\hline
		\end{tabular}
	}
	\hspace*{-.1in}
	{\scriptsize{$^{1^{(a)}}N_g:(4k_\nu+L_{p})(2l_\tau + 1)\left[\text{Prop}\right];\quad (4k_\nu+4)(J(l_\tau+1)+l_\tau) $} `QAM-pilot' \cite{ch_est1}}
\end{minipage}
\begin{minipage}[t][5.2cm]{.4\linewidth}
	\vspace*{-.1in}
	\centering
	\captionof{table}{EVA channel model.}
	\label{table:eva_ch}
	\vspace*{-.07in}
	\scalebox{0.68}{
	\begin{tabular}{|l|c|}
		\hhline{|=|=|}
		{Parameter} & {Value}\\
		\hhline{|=|=|}
		Carrier frequency, $f_c$ & 4 \text{ GHz}\\
		Subcarrier spacing, $\triangle f$ & 15 \text{ KHz}\\
		Number of Doppler bins, $N$ &128\\
		Number of delay bins, $M$ &512\\
		Maximum delay tap, $l_\tau$ &20\\
		\hline
	\end{tabular}
	}

	\vspace*{.85cm}	
	{\scriptsize{\hspace{-3cm}$^{2}N_{d}:(NM-N_g)\frac{J}{K}$\\
			\vspace*{-.1in}
			$^{1^{(b)}}N_g:N(2l_\tau + 1)\ \left[\text{Prop}\right];\  N(J(l_\tau+1)+l_\tau)\ $}   \cite{ch_est1}}
\end{minipage}

\vspace*{-.2in}
Usually, the nominal SE  is given by $S=\frac{N_b}{N_{\text{RE}}}$,  where $N_b$ and $N_{\text{RE}}$ denote the numbers of  information bits and  resource elements (RE), respectively, in the frame. Note that  in the ideal case (BER=0), we have $\text{SE}_{\text{eff}}=S$. For an uncoded SCMA system, we have $S=\left(\frac{J}{K}\right)\log_{2}(|\mathbb{A}|)$. Extending this concept to the channel estimation of uncoded SCMA uplink system,  we get
%\begin{equation}
	$S=\frac{JN_b}{KT+\lceil\frac{KT}{N_{\text{RE}}-N_g}\rceil N_g}\quad \text{ with } T=\frac{N_b}{\log_{2}(|\mathbb{A}|)}$
	%\label{eq:se_ce}
%\end{equation}
where $N_g$ is the number of REs reserved for channel estimation. For a typical OTFS-SCMA  system in  uplink, we have $\log_{2}(|\mathbb{A}|)=2$, $N_{\text{RE}}=NM$, $JN_b=2N_d$, and $T=\frac{N_d}{J}$. Hence the effective SE for the OTFS-SCMA uplink case becomes
\begin{equation}
	\text{SE}_{\text{eff}}=(1-\text{BER})\frac{2N_d}{NM}.
	\label{eq:se_final}
\end{equation}
%where $N_d$ is the total number of data symbols transmitted over a single DD frame, as given in Section~\ref{subsec:guardband}.\\
Fig.~\ref{fig:res_5_se} presents  $\text{SE}_{\text{eff}}$ for the proposed method and the existing methods. 
%From the detailed analysis presented in Section~\ref{subsec:data_pilot}, we can see that the pilot and data symbols are embedded in a single frame. Also, the non-orthogonal pilot pattern results in guard band overhead reduction as shown in TABLE~\ref{table:guardband}. Fig.~\ref{fig:res_5_se} shows the effective spectral efficiency based on the analysis presented in Section~\ref{subsec:se}. 
If the  perfect CE is possible, then the system's SE  is equal to that  of an uncoded OTFS-SCMA system. Based on (\ref{eq:se_final}), for an overloading factor of 150\% and $|\mathbb{A}|=4$, we have  SE$_\text{eff}\approx 3$  bits/s/Hz. Observe from Fig.~\ref{fig:res_5_se} that the proposed channel estimation technique gives SE$_\text{eff}\approx 2.75 $ bits/s/Hz  at ${\text{SNR}}_p=40$ dB. The non-orthogonal pilot pattern improves the SE$_\text{eff}$ by almost $1  $ bits/s/Hz compared to the orthogonal `QAM-pilot' technique  \cite{ch_est1}. The simulation presented in Fig.~\ref{fig:res_5_se} is for $\Gamma_{32,32}$, $\ l_\tau=k_\nu=1$, and $L_p=12$. When we consider a practical EVA channel model, the proposed method gives an even higher SE$_\text{eff}$ advantage as presented in TABLE~\ref{table:guardband}.

\subsection{Pilot Analysis}
\subsubsection{Length Analysis}
 For the proposed method to be successful, the pilot vectors must have a minimum length $L_{p}$ satisfying the condition given in Lemma~\ref{rem:optimal_Lp}. We consider $M=512, N=128, {\text{SNR}}_d=20 \text{ dB}$, and $\text{SNR}_p=40$ dB for the analysis on the length of the pilot vectors.   
 Fig.~\ref{fig:res_9_Lp_prob}  shows the probability of success, defined as the ratio of the number of trials that resulted in successful sparse signal recovery to the total number of trials for different lengths of pilot vectors. Observe that the choice of $L_p$ influences the successful sparse signal recovery. Depending on the Doppler spread of the channel, the lower bound of $L_p$ also decreases. However, a minimum of $L_p=12$ is to be maintained irrespective of the channel conditions. The channel estimation error can be analyzed  by calculating the NMSE, which is given by $10\log_{10}\left(\frac{\mathbb{E}(||\mathbf{h}-\hat{\mathbf{h}}||^{2}_{2})}{\mathbb{E}(||\mathbf{h}||^{2}_{2})}\right)$  dB. We plot the NMSE values   against the pilot lengths in Fig.~\ref{fig:res_9_Lp}. For the successful channel estimation, even at a maximum user velocity of $500$ Kmph, we can select $L_p=20$. Further, from Fig.~\ref{fig:res_9_Lp}, it is evident that to achieve a tolerable NMSE, a minimum pilot vector length of $20$ is  necessary. For $L_p \geq 20$, the NMSE remains almost constant while the guard band overhead increases by around $2.5\%$. Also, from the CRLB curve, it is clear that  for $L_p \geq 20$, the proposed method performs close to the lower bound specified by CRLB. The simulation results shown in Fig.~\ref{fig:res_9_Lp_prob} and Fig.~\ref{fig:res_9_Lp} substantiate Lemma~\ref{rem:optimal_Lp} and its proof given in Appendix~\ref{app:optimal_lp}.

\begin{minipage}[b]{0.5\textwidth}
	\begin{figure}[H]
			\includegraphics[height=6.7cm, width=7cm]{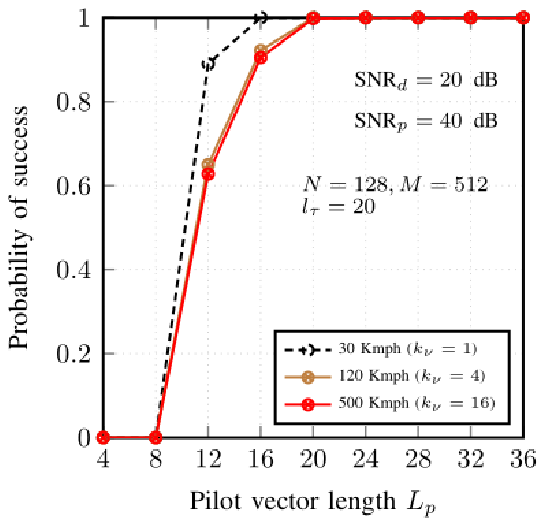}
			\vspace{-0.35in}
	\end{figure}	
	\captionof{figure}{Probability of success for different user velocities.}% methods. ($\Gamma_{32,32}$; $J=6,K=4$; $P=2$; $k_\nu=l_{\tau}=1$)}
	\label{fig:res_9_Lp_prob}
	
\end{minipage}
\begin{minipage}[b]{0.5\textwidth}
	\begin{figure}[H]
		\includegraphics[height=6.7cm, width=8cm]{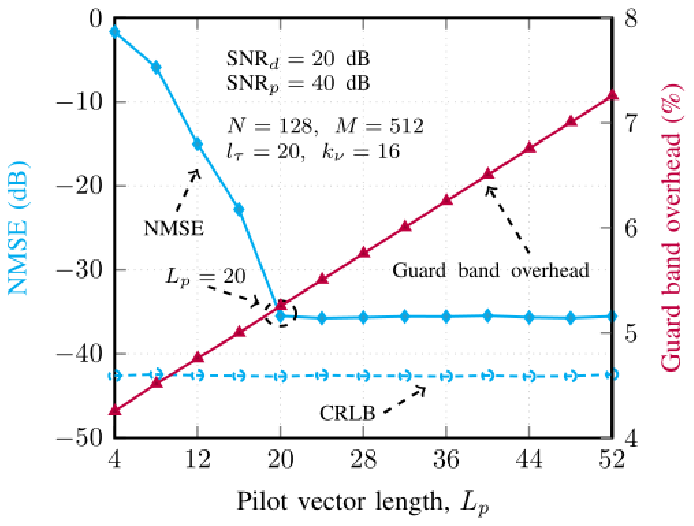}		
	\end{figure}
	\vspace*{-1cm}
	\captionof{figure}{Optimal pilot vector length of proposed method.}
	\label{fig:res_9_Lp}
\end{minipage} 
   
\subsubsection{Pattern Analysis}
 In Fig.~\ref{fig:res_6_nmse_all}, we analyze  the NMSE  for the pilot-data arrangement shown in Fig.~\ref{fig:tx_rx_grid}, using different types of pilot vectors as discussed in Section~\ref{subsec:pilot_seq}. Considering $k_{\nu}=16\ll 128$, we take $L_p=20$ as per Lemma~\ref{rem:optimal_Lp}. The differences in NMSE for these various pilot vectors follow the  mutual coherence analysis presented in Section~\ref{subsec:pilot_seq}. The proposed 
 %The legend `Prop-Gauss. pilot', in Fig.~\ref{fig:res_6_nmse_all} shows the NMSE performance using i.i.d.  complex Gaussian numbers as pilots. 
 `Learned pilot'  provides a lower NMSE than `Gaussian-pilot',  as observed in Fig.~\ref{fig:res_6_nmse_all}.  %Also, the NMSE performance is analysed using the Zadoff-Chu sequence as pilot vectors, which
 `Zadoff-Chu pilot'  gives a poor performance compared to `Learned pilot' and `Gaussian-pilot'. The  NMSE  values for `SCMA cw pilot'  are significantly higher than those of other sequences. The  `Random-SP init'  curve shows NMSE results when the initialization in (line~8 of Algorithm~\ref{algo:ce_alg}) is done randomly. By taking the initial estimates sequentially, we can reduce the estimation error. Also, the NMSE performance using  non-sparse i.i.d Gaussian random vectors is shown by the `Non-sparse Gaussian' curve. The slight performance improvement of the sparse pilot vector over the non-sparse one is  attributed to minimal multi-user interference in the pilot observation region.

\begin{minipage}[b]{0.5\textwidth}
	\begin{figure}[H]
		\includegraphics[height=6.7cm, width=8cm]{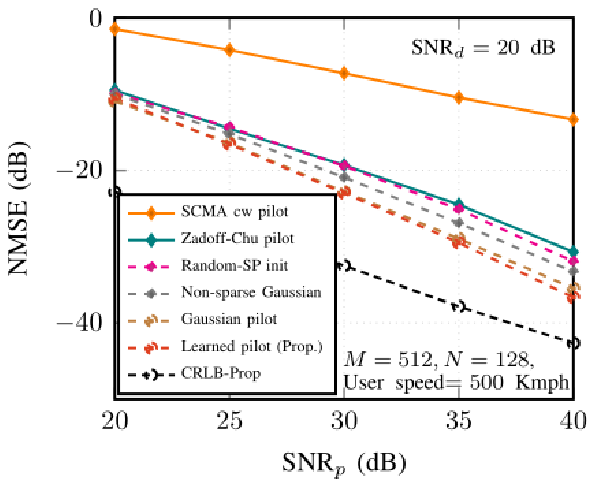}
	\end{figure}
		\captionof{figure}{NMSE for various pilot vectors.}% methods. ($\Gamma_{32,32}$; $J=6,K=4$; $P=2$; $k_\nu=l_{\tau}=1$)}
	\label{fig:res_6_nmse_all}
\end{minipage}
\begin{minipage}[b]{0.5\textwidth}
	\begin{figure}[H]		
		\includegraphics[height=6.7cm, width=8cm]{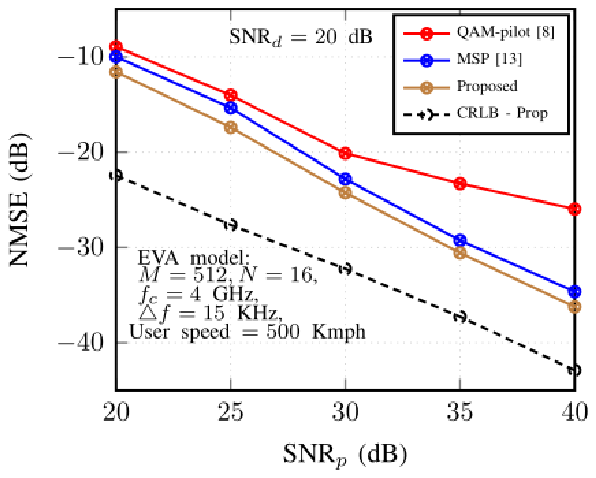}
		\end{figure}
	\captionof{figure}{NMSE of channel estimation in uplink.}% methods. ($\Gamma_{32,32}$; $J=6,K=4$; $P=2$; $k_\nu=l_{\tau}=1$)}
	\label{fig:res_6_nmse}
\end{minipage}

\subsection{NMSE Comparison with Existing Methods}
In Fig.~\ref{fig:res_6_nmse}, we compare the NMSE of different channel estimation techniques. Considering $k_{\nu}\ll 16$, we take $L_p=12$ as per Lemma~\ref{rem:optimal_Lp}. Observe the significant difference between the NMSE values for the sparse signal recovery-based techniques and  `QAM-pilot'. The proposed method  and `MSP' are better than `QAM-pilot' in terms of NMSE since they are based on estimation of $\mathbf{h}$ as a vector.
% using the constraint $||\mathbf{y}-\mathbf{X}\hat{\mathbf{h}}||^{2}_{2}<\epsilon$. 
Moreover, the proposed method gives slightly better NMSE  than `MSP'. This improvement may be attributed to the CSC model acting independently  to each delay tap and  the sequential initial estimate method giving equal preference to all users.  For  `QAM-pilot' of~\cite{ch_est1}, an error floor occurs at higher SNR$_p$. The error floor issue arises as  this method detects a path by thresholding the components scalar-wise and then evaluating the corresponding channel coefficient rather than estimating $\bf{h}$ as a vector.  The CRLB derived in Section~\ref{sec:crlb_limit} is also shown in Fig.~\ref{fig:res_6_nmse}.

% \begin{rem}
%	The BER for the channel estimation is analysed also using SCMA codewords as the pilot vector. It is observed that the BER degrades severely for the pilot codewords selected from the user data codebook itself. It is observed that the codewords of different users have same non-zero components but in different locations. The circulant shifted pilot vectors of different users will result in overlapping of same non-zero components in a location, which affects the sparse signal recovery. \textcolor{red}{this remark to be written along with the plots of NMSE diff cases and in terms of diff in mut.coherence value}  
%\end{rem} 
%Note that the results showed are for the guard band requirement of $k_\nu=l_\tau=1$. For practical values of $k_\nu$ and $l_\tau$, this increase in guard band overhead is not tolerable, as shown in TABLE~\ref{table:guardband}. At the same time, the relative NMSE is not affected by $k_\nu$ and $l_\tau$. Hence considering both these observations, along with the condition of $[L_p]_K=0$, we can select $L_{p}^\ast=12$.
%\pgfplotscreateplotcyclelist{mycolorlist}{%
%	red,every mark/.append style={fill=red!80!black},mark=diamond*\\
%	blue,every mark/.append style={fill=blue!80!black},mark=diamond*\\
%	brown,every mark/.append style={fill=brown!80!black},mark=diamond*\\		
%}

\subsection{Complexity Analysis}\label{subsec:sim_complexity}
Fig.~\ref{fig:res_complexity} highlights the complexity  of the proposed CSC modeling in terms of the number of complex multiplications. The legend `OMP' refers to the direct OMP method discussed in \cite{ce_oma}. Observe from TABLE~\ref{table:comp_comp} and Fig.~\ref{fig:res_complexity} that the proposed method is superior in two aspects of complexity:~(i)~ For a given DD grid of dimension $N$ and $M$, the proposed method's complexity depends only on the maximum Doppler $k_\nu$ of the channel. It is observed that up to  $k_\nu+1 \approx \frac{N}{2}$, the proposed method offers a noticeable complexity reduction; ~(ii)~ The complexity of proposed method is completely independent of the delay dimension $M$, since  each delay path is processed independently. On the other hand,  the conventional sparse signal recovery methods always have a constant maximum complexity independent of the Doppler conditions of the channel.
Fig.~\ref{fig:avg_iterations} presents the analysis of the average number of iterations required for the convergence of Algorithm~\ref{algo:ce_alg} for different values of the number $P$ of multipaths. Observe  that the number of iterations required is independent of $P$. The legends `Prop-ideal' and `Prop-sim'  show the  average number of iterations required for the proposed method as per TABLE~\ref{table:comp_comp} and the simulations respectively. These two curves exhibit high correlation and remain at the same level as $P$  increases.  On the other hand, for `OMP' and `MSP' algorithms, the number of iterations increases with $P$.

\begin{minipage}[b]{0.5\textwidth}
	\begin{figure}[H] 	
	\includegraphics[height=6.7cm, width=8cm]{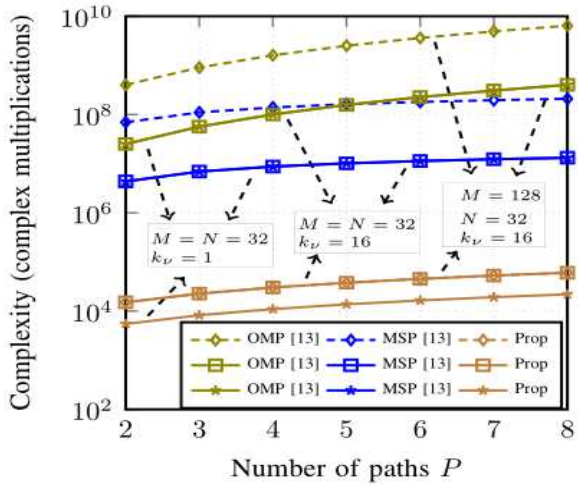}
	\captionof{figure}{Complexity analysis of proposed method.}
	\label{fig:res_complexity}
\end{figure}
\end{minipage}
\begin{minipage}[b]{0.5\textwidth}
	\begin{figure}[H] 
			\includegraphics[height=6.7cm, width=8cm]{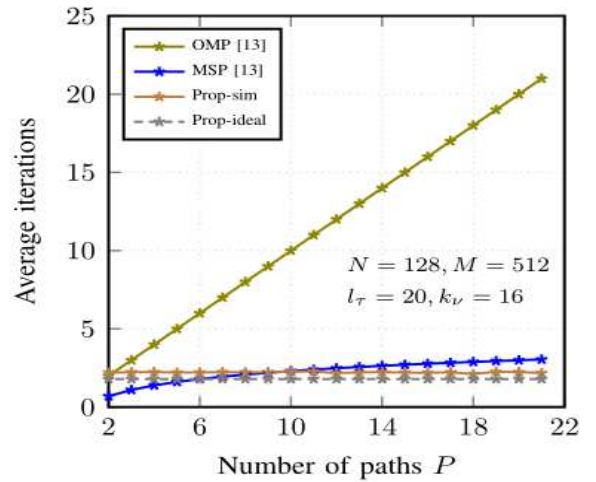}
	\captionof{figure}{Average iterations for convergence of Algorithm~\ref{algo:ce_alg}.}
	\label{fig:avg_iterations}	
	\end{figure}
\end{minipage}
\section{Conclusion}  \label{sec:conc}
In this paper, we presented a channel estimation technique for OTFS-SCMA based on CSC. The proposed method considers sparse vectors of optimum length as pilots, which follow the same sparsity pattern of the SCMA data codewords of all users. The sparse pilot vectors are designed such that the mutual coherence of the corresponding dictionary is minimized. The non-orthogonal arrangement
of the pilot vectors ensures that the guard band overhead is minimal and does not escalate with the number of users. Moreover, the proposed CSC model converts the overall problem to the channel estimation at each delay tap. This reduction of
the dimensionality, in turn, lowers the complexity of the sparse recovery algorithm significantly. Finally,  impressive BER performances and spectral efficiency are obtained, maintaining a reduced guard band overhead and complexity.  These results corroborate the proposed method's suitability for a high Doppler uplink scenario.

\appendices
\section{Proof of Lemma 1}\label{app:optimal_lp}

Consider the channel estimation problem defined in (\ref{eq:rearrange_mat}):
\begin{equation}
	\mathbf{y}_{l}=\mathbf{P}{\bf{{h}}}_{l-\bar{l}}\ \ \;\;\;\text{for } l=\bar{l}, \ \bar{l}+1,\ldots , M-1
	\label{eq:rearrange_mat_ap}
\end{equation}
where ${\bf{{h}}}_{l-\bar{l}}$ is the sparse signal to be recovered. The minimum value of $L_{p}$ is derived considering the following points:% properties of compressive sensing, Algorithm~\ref{algo:ce_alg} and OTFS-SCMA.
\begin{enumerate}
	\item For Algorithm~\ref{algo:ce_alg}, the maximum possible number of columns in $\mathbf{P}_{{(\tilde{S}^{(i)})}}$ is $2J$. For any matrix $\mathbf{A}$, we have $\mathbf{A}^{\dagger} = (\mathbf{A}^{H}\mathbf{A})^{-1}\mathbf{A}^{H}$ and $\mathbf{A}^{\dagger}\mathbf{y}$ results in a solution having the number of non-zero elements as $ \text{rank}(\mathbf{A}^{\dagger})$. Since $\mathbf{P}_{{(\tilde{S}^{(i)})}}^{\dagger}\mathbf{y}_l$ (line~13 of Algorithm~\ref{algo:ce_alg}) has to recover $2J$ elements in each iteration the following condition must be satisfied: 	
\begin{equation}
	\left.\begin{aligned}	
	\text{rank}({\mathbf{P}^{\dagger}_{{(\tilde{S}^{(i)})}}})\geq 2J \ \Rightarrow L_p\geq 2J .
\end{aligned}\right.
\label{eq:cs_rank}
\end{equation}
\item Successful sparse signal recovery is guaranteed  if a minimum number of observations are available, as discussed in \cite{cs_eldar} and \cite{cs_review}. In general, for a signal having sparsity of $\frac{u}{U}$, for successful recovery, the number $V$ of observations must satisfy $V \ge \lceil cu\log U\rceil$ with $1<c\le2$. Specific to (\ref{eq:rearrange_mat_ap}), $V$    is limited by the Doppler spread of the channel.  Assuming the presence of maximum Doppler paths in the channel, we have $V=L_p+2k_\nu$ and for successful recovery, the following condition must be satisfied:
\begin{equation}
%	\left.\begin{aligned}
%	L_p+2k_\nu \ge &\lceil cJ\log(J(2k_\nu + 1))\rceil\\
	L_p  \ge \lceil cJ\log(J(2k_\nu + 1))\rceil-2k_\nu.
%		\end{aligned} \right. 
\label{eq:cs_obs_1}
\end{equation}
%The suitable value of $c$ depends on $k_\nu$ and 
For the EVA parameters of TABLE~\ref{table:eva_ch} ($k_\nu=16$)  with the minimum $L_p=12$ given by (\ref{eq:cs_rank}), we have from  (\ref{eq:cs_obs_1}):
%\begin{equation}
%	\left.\begin{aligned}
		$c=\frac{L_p+2k_\nu}{J\log(J(2k_\nu + 1))}\approx 1.4$.
	%	L_p =&\lceil cJ\log(J(2k_\nu + 1))\rceil-2k_\nu
%	\end{aligned}\right.
%	\label{eq:cs_obs_2}
%\end{equation}
Thus $L_p=12$ satisfies the condition in (\ref{eq:cs_obs_1}).  However, if  the maximum Doppler paths are absent in the channel, then a higher value of $L_p$  is desirable as observed in the simulation results shown in Fig.~\ref{fig:res_9_Lp}.
%Hence, for $c=1.4$, the condition given in (\ref{eq:cs_obs_1}) becomes,
%\begin{equation}
%	\left.\begin{aligned}
%		L_p =&\lceil 1.4*J\log(J(2k_\nu + 1))\rceil-2k_\nu
%	\end{aligned}\right.
%	\label{eq:cs_obs_3}
%\end{equation}
%	
\item For the CSC problem,  the global sparsity $s$ and the  local sparsity $s_L$ are related as $s\approx \frac{s_L}{2p-1}U$, where $p$ is the length of a dictionary element and $U$ is the length of the sparse signal to be recovered \cite{csc_slice}. Specific to (\ref{eq:rearrange_mat_ap}), we have a fixed sparsity $s=J$, the length  $L_p$ of the pilot vector corresponds to $p$, and $U$ is $J(2k_\nu+1)$. Considering the  uniform distribution of the Doppler taps, we take $s_L=1$. The sparsity relation can be expressed as
\begin{equation}
	\left.\begin{aligned}
		&J \approx\frac{1}{2L_p-1}J(2k_\nu+1) \Rightarrow
      L_p \approx k_\nu+1 .\\
	\end{aligned}\right.
	\label{eq:csc_lp_proof}
\end{equation}
For EVA parameters, we get  $L_p \approx 17$. 
%	\item To ensure minimum $\mu(\mathbf{P})$ the circulant shifted columns should not be present in $\mathbf{P}_G$. This can be ascertained, if $L_p+k_\nu\le N$. 
\item Since the pilot vectors  follow the same  sparsity structure of the codewords with length $K$, we must have  $[L_p]_K=0$.	 
\end{enumerate}
To minimize the guard band overhead, $L_{p}$ is chosen as the nearest integer satisfying the above conditions:
\begin{equation}
L_p > \max\left\{2J,  \left\lceil cJ\log(J(2k_\nu + 1))\right\rceil -2k_\nu,   k_\nu+1 \right\}  \;\;\;  \text{ with } \left[L_p\right]_K=0\text{ and }1<c\le2 .
\label{eq:Lp_proof}
\end{equation}
\bibliographystyle{ieeetran}
\footnotesize
\bibliography{references_OTFS_SCMA}
\end{document}